\documentclass[journal]{IEEEtran}
\usepackage{cite}
\usepackage{amsmath,amssymb,amsthm}
\usepackage{algorithm, algpseudocode}
\usepackage{graphicx}

\usepackage[colorlinks=true]{hyperref}
\usepackage{booktabs} 
\usepackage{array} 
\usepackage{tikz}
\usepackage{comment}
\usetikzlibrary{positioning, arrows.meta}

\newcommand{\pr}[1]{\mathbb{P}\left\{ #1 \right\}}

\newcommand{\mc}[1]{\mathcal{#1}}

\newcommand{\pie}{\pi_{\mathrm{e}}} 

\newcommand\blfootnote[1]{%
  \begingroup
  \renewcommand\thefootnote{}%
  \footnote{#1}%
  \addtocounter{footnote}{-1}%
  \endgroup
}

\newtheorem{thm}{Theorem}
\newtheorem{lem}{Lemma}
\newtheorem{prop}{Proposition}
\newtheorem{cor}{Corollary}

\begin{document}
\title{Harvesting-Aware Routing for Low-Latency Reliable Transmission in Interference-Powered Networks}
\author{Azadeh Sheikholeslami, Allan Nguyen\\
Math and Computer Science Department, Suffolk University, Boston, MA}
\date{}

\maketitle

\begin{abstract}
We consider a multi-hop wireless network where energy-harvesting nodes rely solely on ambient
radio-frequency interference as their power source. The goal is to deliver a packet from source to
destination with minimum latency while guaranteeing a target end-to-end outage probability under
Rayleigh fading. This gives rise to a joint routing and power allocation problem in which interference
plays a dual role: it provides harvestable energy, but also impairs link reliability. We first transform the
intricate, multiplicative end-to-end outage constraint into an additive form. Then, we propose two optimal algorithms. The first, HOPE, minimizes the
bottleneck harvesting latency, i.e., the longest energy-accumulation time along the path, by reducing the
problem to a least-cost path computation on a graph with interference-aware link weights. The second,
THOPE, extends the objective to total end-to-end latency, adding the cumulative transmission delay,
and likewise admits a closed-form least-cost path solution. Both algorithms are solved in polynomial
time via Dijkstra's algorithm and yield explicit expressions for the optimal transmit powers. Extensive
simulations reveal that HOPE and THOPE reduce the average total latency by up to an order of
magnitude compared with shortest-path and minimum-energy benchmarks, while exhibiting sharply
concentrated delay distributions that are largely insensitive to interferer density. 
\end{abstract}

\section{Introduction}
\label{sec:introduction}
\blfootnote{Parts of this work were presented at the 61st Allerton Conference on Communication, Control, and Computing, 2025~\cite{sheikholeslami2025minimum}, and at the 2026 IEEE MIT Undergraduate Research Technology Conference (URTC)~\cite{nguyen2026total}.}
Next-generation wireless networks, spanning industrial IoT, smart cities, disaster-response systems,
and autonomous sensor swarms, must simultaneously satisfy three stringent requirements:
\emph{reliability} (successful data delivery under strict outage limits), \emph{low latency} (timely delivery from
sensing to actuation), and \emph{energy self-sufficiency} (perpetual operation without battery replacements).
Although each of these demands has received substantial attention in isolation, their
joint optimization remains an open challenge, especially when energy is harvested from the
same radio-frequency (RF) environment that also causes interference.

Traditional routing protocols for multi-hop wireless networks assume a continuous power
supply~\cite{zeng2021failure,dehghan2011minimum} and treat interference as a purely detrimental factor~\cite{sheikholeslami2014jamming,sheikholeslami2016energy,bhavathankar2017link,cheng2022revealing}. In energy-harvesting
(EH) networks, however, ambient RF interference can be converted into usable energy~\cite{bi2015wireless,lu2015wireless,zhang2013mimo}.
This paradigm shift---from interference avoidance to interference exploitation---enables
sustainable operation in scenarios where dedicated power infrastructure is impractical, such as
structural health monitoring, precision agriculture, or post-disaster emergency networks. Yet,
harvesting from interference also introduces a fundamental tension: the same interferers that
provide energy also degrade the signal-to-interference-plus-noise ratio (SINR), threatening link
reliability.

Most existing ambient-RF harvesting designs either focus on single-hop links~\cite{guo2016exploiting,ozel2011transmission} or
adopt energy-aware routing heuristics without probabilistic outage guarantees~\cite{lin2007asymptotically,mao2011near}. Few
works simultaneously consider multi-hop routing, dynamic power allocation, and end-to-end
outage constraints. For instance, Thai~\cite{thai2024opportunistic} proposes trust-driven routing that harvests from
jammers, Ghosh~\cite{ghosh2023performance} analyses fixed multi-hop EH topologies, Wang~\cite{wang2024simultaneous} introduces a clustering
protocol for cognitive radio sensors, Shan~\cite{shan2024optimal} develops a harvest-then-transmit schedule, and
Ren~\cite{ren2022opportunistic} presents opportunistic routing with dynamic power. While these works advance the
state of the art, none of them jointly determine both the route and the transmit powers so
as to minimize the end-to-end latency that arises from the \emph{harvesting bottleneck}: the fact that
multi-hop forwarding cannot begin until every node on the path has accumulated sufficient
energy. Moreover, they rarely consider the additional cumulative transmission delay across hops,
which can become significant in dense or long paths.

In this paper, we propose two complementary algorithms that close this gap by jointly optimizing
routing and power allocation under both energy-neutrality and end-to-end outage constraints,
while explicitly treating ambient interference as a harvestable resource.
\begin{itemize}
  \item \emph{HOPE (Harvesting-Optimized Path selection with End-to-end outage guarantees):} minimizes
  the \emph{bottleneck harvesting latency}, i.e., the time required for the slowest node on
  the path to harvest enough energy. We show that the resulting min-max problem can
  be transformed into an equivalent shortest-path problem on a weighted graph, enabling
  polynomial-time solution via Dijkstra's algorithm (Section~\ref{sec:hope}).
  \item \emph{THOPE (Total-latency HOPE):} extends the objective to minimize the \emph{total end-to-end
  latency}, defined as the sum of the bottleneck harvesting time and the cumulative transmission
  time over all hops. Remarkably, this more complex objective also admits a closed-form
  shortest-path formulation, with a modified link weight that accounts for both the harvesting
  time and the constant per-hop transmission duration (Section~\ref{sec:thope}).
\end{itemize}

Both algorithms build on a unified system model in which nodes operate in an energy-neutral
manner, using only the RF energy harvested from ambient interferers during a harvest-then-transmit
cycle. We derive the per-link transmission time from the SINR threshold and packet size, showing
that it becomes network-wide constant under a common rate policy, which allows the entire
problem to be cast as a pure graph-theoretic optimization. Extensive numerical comparisons
against Shortest-Path routing with uniform outage allocation and Minimum-Energy Routing with
outage constraints demonstrate that HOPE and THOPE dramatically reduce total latency (up to
a 10-fold improvement), exhibit sharply concentrated delay distributions, and adapt gracefully
to varying interferer densities, network sizes, and outage requirements (Section~\ref{sec:results}).

The key contributions of this work are as follows. We provide a systematic transformation
of the intricate, multiplicative end-to-end outage constraint into an additive sufficient condition
under Rayleigh fading in Lemma~\ref{lem:e2e_outage}, which decouples link-level variables and enables efficient optimization.
Then, we develop the HOPE framework, which formulates the bottleneck harvesting-latency
minimization as a convex program for a fixed path, yields a closed-form solution via KKT
conditions, and reduces to a shortest-path problem with interference-aware link weights. We extend the framework to total end-to-end latency through THOPE, which additionally accounts
for the cumulative transmission delay, and we derive the corresponding shortest-path weights. 
The link weights for algorithms described above only depend on the aggregate interference power at each node. This property comes from the conservative upper bound used in Lemma~\ref{lem:e2e_outage}. We show that this upper bound is very tight, and using the exact outage probability for routing will not improve the harvesting latency beyond a negligible fraction.
Finally, we present a comprehensive performance evaluation that reveals the gains of
joint routing and power allocation over conventional benchmarks and provides insights into the
trade-offs between harvesting time, hop count, and reliability.

The remainder of the paper is organized as follows. Section~\ref{sec:system} describes the system model,
including the channel, energy-harvesting, and outage characterization. Section~\ref{sec:hope} formulates the
HOPE problem and presents its solution. Section~\ref{sec:thope} extends the framework to THOPE. 
Section~\ref{sec:tightness} analyzes the tightness of the additive outage constraint, and Section~\ref{sec:results}
provides numerical results and comparisons. Finally, Section~\ref{sec:conclusion} concludes the paper and outlines
future research directions.

\section{System Model and Preliminaries}
\label{sec:system}

\subsection{Network Model}
We consider a multi-hop wireless network of energy-harvesting (EH) nodes
deployed at arbitrary positions in the plane, modeled as a directed graph
$\mathcal{G}=(\mathcal{N},\mathcal{E})$, where $\mathcal{N}$ is the set of
nodes and $\mathcal{E}$ is the set of feasible links: a directed link $(u,v)$
belongs to $\mathcal{E}$ if and only if $d_{u,v}\le r_c$, where $d_{u,v}$ is the
Euclidean distance between $u$ and $v$ and $r_c$ is the common communication
range. The nodes have neither
a long-term battery nor an external power supply: their only energy source
is the radio-frequency (RF) interference generated by a set $\mathcal{I}_0$
of interferers at arbitrary positions, which may be intentional jammers or
uncoordinated co-channel transmitters. Interferer $i\in\mathcal{I}_0$
transmits continuously at a constant power $P_i$, and the positions and
powers of the interferers remain fixed over the harvest--transmit cycle of a
packet. We consider the delivery of a single flow; any concurrent traffic is
treated as part of this exogenous interferer set, with a time-averaged power
at each node that is likewise constant over the cycle. All nodes and
interferers use omnidirectional antennas and share a common frequency band.

A packet is to be delivered from a source $S$ to a destination $D$ over a
simple path $\Pi$, that is, a sequence of directed links
$(u,v)\in\mathcal{E}$ in which no node appears twice. We write $u\in\Pi$ for
the transmitting nodes of $\Pi$, namely the source and the relays; each
transmits on exactly one link of the path, and the destination only
receives.

\subsection{Channel Model}
\label{sec:channel}
All channels undergo frequency-flat Rayleigh block fading. The fading
coefficient $h_{u,v}$ between nodes $u$ and $v$ is constant over the
transmission of a packet and independent across links, with $|h_{u,v}|^2$
exponentially distributed and normalized to $\mathbb{E}[|h_{u,v}|^2]=1$; the
coefficients $h_{i,v}$ from interferer $i$ to node $v$ are defined likewise.
When node $u$ transmits, the signal received at node $v$ is
\begin{equation}
y_{v} = \underbrace{h_{u,v}\sqrt{\frac{P_u}{d_{u,v}^\alpha}}\,x_u}_{\text{desired signal}}
+ \underbrace{\sum_{i\in\mathcal{I}_0} h_{i,v}\sqrt{\frac{P_i}{d_{i,v}^\alpha}}\,x_i}_{\text{interference}}
+ n_v,
\label{eq:received_signal}
\end{equation}
where $x_u$ ($x_i$) is the unit-power symbol transmitted by node $u$
(interferer $i$), $P_u$ ($P_i$) is the corresponding transmit power,
$d_{u,v}$ ($d_{i,v}$) is the distance to node $v$, $\alpha$ is the path-loss
exponent, and $n_v\sim\mathcal{CN}(0,N_0)$ is additive white Gaussian noise.
So that the received power cannot exceed the transmitted power in the near
field, every path-loss distance is floored at a reference distance $d_0$; with
a slight abuse of notation, $d_{u,v}$ denotes $\max(d_{u,v},d_0)$ in all
path-loss terms.
Receivers treat interference as noise. Under the single-flow assumption of
Section~\ref{sec:system} and because the nodes of $\Pi$ forward the
packet sequentially (Section~\ref{sec:harvest-then-transmit}), at most one
node transmits at any time, so the only interference in the network is that
generated by $\mathcal{I}_0$.

\subsection{Transmission Rate and Duration}
\label{sec:transmission_time}
All nodes decode with a common SINR threshold $\gamma$
(Section~\ref{sec:outage}) and transmit at the fixed rate
$\rho=\log_2(1+\gamma)$~bits/s/Hz, the rate at which a Gaussian codebook is
decodable whenever the instantaneous SINR is at least $\gamma$. A packet of
$B$ bits sent over a bandwidth $W$ therefore occupies each link for
\begin{equation}
\tau_{u,v} = \frac{B}{W \log_2(1+\gamma)} .
\label{eq:tau_link}
\end{equation}
Because $\gamma$, $B$, and $W$ are common to the whole network, the
transmission time is the same on every link; we denote this constant value
by $\tau$, so that $\tau_{u,v}=\tau$ for all $(u,v)\in\mathcal{E}$.

\subsection{Energy Harvesting and Energy Neutrality}
Each node operates in an \textit{energy-neutral}
manner~\cite{ulukus2015energy}, with only short-term buffering and no
reliance on long-term battery storage\footnote{In practice, we assume each
node is equipped with a low-cost electric double-layer supercapacitor (EDLC)
that can buffer the harvested energy for short durations (on the order of
tens of seconds to a few minutes), sufficient to span the harvest--transmit
cycle. Incorporating the supercapacitor buffer preserves energy neutrality
while enabling brief scheduling flexibility between harvesting and
transmission~\cite{subasinghage2024supercapacitor}.}. During a harvesting interval of duration $T_u$, node $u$ collects the energy
\begin{equation}
E^{\text{harv}}_u = \bar{\Phi}_u\, T_u,
\label{eq:harvested_power}
\end{equation}
where $\bar{\Phi}_u>0$ is the time-averaged power delivered by the harvester of
node $u$. It is a non-decreasing function
\begin{equation}
\bar{\Phi}_u=\Phi(I_u),
\qquad
I_u \triangleq \mathbb{E}\!\left[\sum_{i\in \mathcal{I}_0}
\frac{P_{i}|h_{i,u}|^2}{d_{i,u}^\alpha}\right]
= \sum_{i\in\mathcal{I}_0} \frac{P_i}{d_{i,u}^\alpha},
\label{eq:I_def}
\end{equation}
of the average interference power $I_u$ received at node $u$: the stronger the
interference reaching a node, the more energy it harvests. This is the precise
sense in which interference is a resource
here~\cite{bi2016wireless}, and it is what the routing of
Section~\ref{sec:hope} exploits.

We leave the characteristic $\Phi$ otherwise arbitrary. It may be the response
of any rectifier, including the limited sensitivity and the saturation
exhibited by practical
circuits~\cite{boshkovska2015practical,alevizos2018sensitive,clerckx2019fundamentals},
and every analytical result in this paper holds for any such $\Phi$. For
instance, a practical rectifier may be modeled by $\Phi(I)=\eta I$ between its
sensitivity threshold and its saturation knee, and by a bounded function
beyond it; the results below are unchanged in either regime, since they depend
on $\Phi$ only through the per-node constant $\bar{\Phi}_u$. Only two
properties are used: $\bar{\Phi}_u$ is constant over the harvesting interval,
which holds when that interval spans many fading coherence intervals, and its
value is known to node $u$, which measures it at its own rectifier output. A
node with $\bar{\Phi}_u=0$, for instance one below the sensitivity threshold of
its rectifier, can never fund a transmission and is simply excluded from the
routing graph.

For the numerical evaluation of Section~\ref{sec:results} we adopt the linear
model $\Phi(I_u)=\eta\,I_u$,
where $\eta \in (0,1)$ is the energy conversion efficiency; the model is
accurate between the sensitivity threshold and the saturation level of a
practical rectifier. The parameters of Section~\ref{sec:results} are
normalized rather than drawn from a particular deployment, so the simulations
there compare the schemes across regimes rather than predict absolute
latencies. The contribution of thermal noise to the harvested power is
negligible relative to that of the interferers and is omitted.

Transmitting at power $P_u$ for the duration $\tau$ consumes the energy
$E^{\text{tx}}_u=P_u\tau$. Energy neutrality requires
$E^{\text{tx}}_u\le E^{\text{harv}}_u$, that is,
\begin{equation}
T_u \geq \frac{P_u\tau}{\bar{\Phi}_u}.
\label{eq:energy_neu}
\end{equation}
Within a harvest--transmit cycle, harvested energy may be buffered as
described above; no energy, however, is carried over from one cycle to the
next, so each transmission must be funded entirely by the harvesting interval
that precedes it.
This harvest-use operation is representative of ultra-low-power nodes with
small energy storage~\cite{sudevalayam2011energy} and yields a conservative, worst-case
latency.

We assume each node $v$ can measure the aggregate interference power $I_v$
received at its own antenna, together with the power $\bar{\Phi}_v$ its
harvester delivers, and that these measurements are available to the source
when the route is computed. No knowledge of the number, positions, or powers
of the individual interferers is required.

\subsection{Harvest-then-Transmit Protocol and Latency}
\label{sec:harvest-then-transmit}
Nodes follow a harvest-then-transmit protocol. We assume the route $\Pi$ is
computed before delivery begins and is known to every node on it. At time
$t=0$, every transmitting node of $\Pi$ begins harvesting with no stored
energy, and node $u$ harvests for $T_u$. Once all transmitters have completed harvesting, the
packet is forwarded in a store-and-forward manner, hop by hop from $S$ to
$D$, each hop occupying $\tau$. Forwarding therefore cannot begin until the
slowest transmitter has finished harvesting, as illustrated in
Fig.~\ref{fig:system_model}: the node with the longest harvesting interval
is a bottleneck that dictates the start of forwarding for the entire path.

\begin{figure}
\centering
\includegraphics[width=\columnwidth]{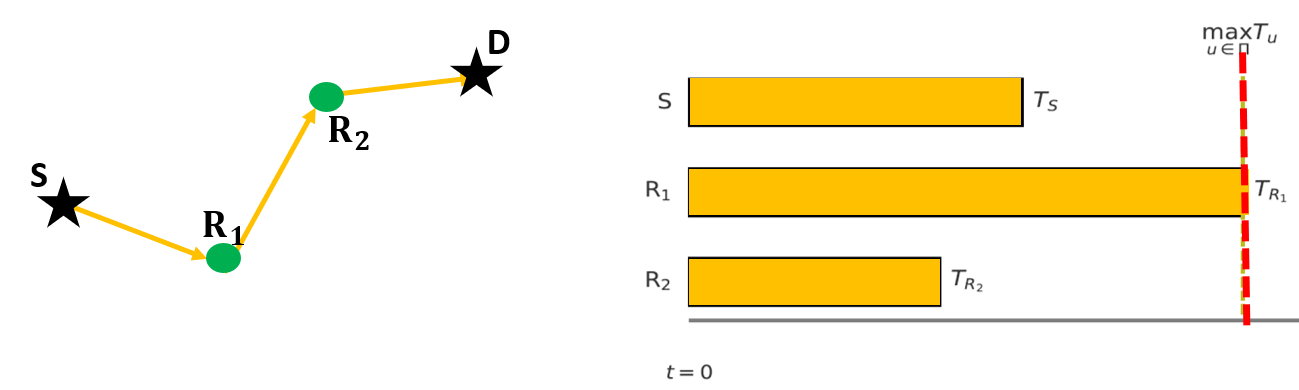}
\caption{Multi-hop energy-harvesting network and bottleneck latency:
The source, $R_1$, and $R_2$ start harvesting at $t=0$; the node with the longest $T_u$ (here $R_1$) determines when end-to-end forwarding can begin.}
\label{fig:system_model}
\end{figure}

\subsection{Outage Probability and End-to-End Reliability}
\label{sec:outage}
Link $(u,v)$ is in outage when the SINR at its receiver falls below the
decoding threshold $\gamma$:
\begin{align}
p_{\text{out}}^{u,v} = \pr{\frac{P_{u}|h_{u,v}|^2/d_{u,v}^\alpha}{N_0+\sum_{i\in \mathcal{I}_0} P_{i}|h_{i,v}|^2/d_{i,v}^\alpha}<\gamma },
\label{eq:p_out_link}
\end{align}
where the sum in the denominator is the instantaneous interference power at
node $v$. Distinct links involve fading coefficients that are independent by
assumption (Section~\ref{sec:channel}), so their outage events are independent, and the end-to-end outage
probability of $\Pi$ is
\begin{equation}
 p_{\text{out}}^{SD} = 1 - \prod_{(u,v)\in\Pi} \left(1 - p_{\text{out}}^{u,v}\right).
 \label{eq:outage_SD}
\end{equation}
We impose the end-to-end reliability constraint
\begin{equation}
p_{\text{out}}^{SD} \leq \pie,
\label{eq:mul_out}
\end{equation}
which, by~\eqref{eq:outage_SD}, reads
\begin{equation}
1 - \prod_{(u,v)\in\Pi} \left(1 - p_{\text{out}}^{u,v}\right) \leq \pie.
\label{eq:outage_product}
\end{equation}

Under Rayleigh fading each factor $1-p_{\text{out}}^{u,v}$ admits a closed
form (Appendix~A) that contains a separate term for
every interferer, and the product in~\eqref{eq:outage_product} couples the
transmit powers of all transmitters on $\Pi$. Taking logarithms turns the
product into a sum over hops, and the resulting constraint is in fact convex
in the transmit powers, since each summand of~\eqref{eq:exact_constraint} is
of the form $a/P_u$ or $\ln(1+b/P_u)$ with $a,b>0$, both convex on $P_u>0$;
the difficulty lies elsewhere. The smallest latency
that~\eqref{eq:outage_product} permits on a given path is defined only
implicitly, as the root of a nonlinear equation, so it does not decompose
into per-link costs and cannot be minimized over paths by a shortest-path
computation; moreover, evaluating it requires the contribution of each
interferer individually. The following lemma removes both obstacles by
replacing~\eqref{eq:outage_product} with a conservative condition that is
linear in each transmitter's inverse power and involves the interferers only
through their aggregate power.

\begin{lem}[Sufficient Condition for End-to-End Outage Constraint]
\label{lem:e2e_outage}
For a given outage threshold $\pie$, let $\epsilon \triangleq -\gamma^{-1}\ln(1-\pie)$.
Then the end-to-end reliability constraint $p_{\text{out}}^{SD} \leq \pie$ is satisfied if
\begin{equation}
\sum_{(u,v)\in\Pi}\frac{d_{u,v}^\alpha}{P_u}(N_0 + I_v) \leq \epsilon,
\label{eq:outage_constraint}
\end{equation}
where $I_v$ is the average interference power at node $v$, defined
in~\eqref{eq:I_def}.
\end{lem}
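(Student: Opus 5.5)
We plan to work link by link, bounding each success probability from below by a simple exponential, and then multiply the bounds along the path.

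\emph{Exact success probability of one link.} For a fixed link $(u,v)$, write $X=|h_{u,v}|^2\sim\mathrm{Exp}(1)$ and $Y_i=|h_{i,v}|^2\sim\mathrm{Exp}(1)$, all independent. Set $c_i = P_i/d_{i,v}^\alpha$, so that $I_v=\sum_i c_i$. Conditioning on the interference powers, the link succeeds when $X\ge \gamma d_{u,v}^\alpha (N_0+\sum_i c_i Y_i)/P_u$. Define $s\triangleq \gamma d_{u,v}^\alpha/P_u$. The conditional success probability is $e^{-sN_0}\prod_i e^{-s c_i Y_i}$. Averaging over the independent $Y_i$ gives the closed form
\[
1-p_{\text{out}}^{u,v}=e^{-sN_0}\prod_{i\in\mathcal{I}_0}\frac{1}{1+s c_i}.
\]
This is the expression referred to in Appendix~A.

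\emph{Lower bound via $1+x\le e^x$.} Applying $1+x\le e^{x}$ to each factor gives $\prod_i (1+sc_i)^{-1}\ge e^{-s\sum_i c_i}=e^{-sI_v}$. Equivalently, we can invoke Jensen's inequality on the convex map $y\mapsto e^{-sy}$ applied to the random interference, whose mean is $I_v$. Either way,
\[
1-p_{\text{out}}^{u,v}\ \ge\ \exp\!\Big(-\gamma\,\frac{d_{u,v}^\alpha}{P_u}(N_0+I_v)\Big).
\]

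\emph{Multiply along the path.} By the independence of outage events across links established in Section~\ref{sec:outage}, we obtain
\[
\prod_{(u,v)\in\Pi}(1-p_{\text{out}}^{u,v})\ge \exp\!\Big(-\gamma\sum_{(u,v)\in\Pi}\frac{d_{u,v}^\alpha}{P_u}(N_0+I_v)\Big).
\]
Under \eqref{eq:outage_constraint}, the exponent is at least $-\gamma\epsilon=\ln(1-\pie)$. Hence the product is at least $1-\pie$, and by \eqref{eq:outage_product} this gives $p_{\text{out}}^{SD}\le\pie$.

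The only nonroutine step is the per-link closed form. It requires exchanging the expectation with the product over interferers, which is justified by independence and the exponential moment generating function. It also requires taking the direction of the inequality correctly, so that the bound is a lower bound on the success probability; the step $1+x\le e^x$ gives exactly this conservative direction. One further point needs checking: the floored distance $d_{u,v}=\max(d_{u,v},d_0)$ only changes constants, so the argument goes through unchanged.
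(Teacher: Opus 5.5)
Your proposal is correct and follows essentially the same route as the paper: you derive the exact Rayleigh per-link success probability $e^{-\gamma d_{u,v}^\alpha N_0/P_u}\prod_{i}(1+x_{i,v})^{-1}$ (the paper's \eqref{eq:exact_link}), lower-bound each interferer factor via $1+x\le e^{x}$, and multiply along the path so that the additive condition gives $\prod(1-p_{\text{out}}^{u,v})\ge e^{-\gamma\epsilon}=1-\pie$. Your Jensen remark is a small bonus the paper does not make: it shows the same bound holds for any distribution of the aggregate interference with mean $I_v$, so Rayleigh fading is needed only on the desired link.
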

\begin{proof}
The proof is provided in Appendix~A; it applies
$e^{x}\ge 1+x$ to each interferer's contribution in the exact per-link
expression~\eqref{eq:exact_link} below.
\end{proof}

We refer to~\eqref{eq:outage_constraint} as the \emph{additive} condition, since it is what makes the optimal latency of a path a sum of per-link costs (Section~\ref{sec:hope}).
Because it involves the interferers only through the aggregate $I_v$, the
resulting algorithms require no more than a received-power measurement at
each node, whereas the exact constraint requires each interferer to be
resolved individually. The price of this substitution is quantified in
Section~\ref{sec:tightness}, which shows that it is very nearly tight.

In the next section, we formulate the joint routing and power-control problem
that minimizes the bottleneck harvesting latency while guaranteeing both
energy neutrality and the end-to-end outage constraint.

\section{Minimizing Bottleneck Harvesting Latency (HOPE)}
\label{sec:hope}

Recall from Section~\ref{sec:harvest-then-transmit} that multi-hop transmission from $S$ to $D$ can only commence once \emph{all} nodes on $\Pi$ have harvested sufficient energy. The node requiring the longest harvesting time thus acts as the bottleneck, determining the overall start time of message forwarding. Our goal is to find and characterize the path that minimizes this bottleneck latency while guaranteeing end-to-end reliability.

To this end, we propose HOPE (Harvesting-Optimized Path selection with
End-to-end outage guarantees), which jointly optimizes routing and transmit
power allocation to minimize the \emph{maximum} harvesting time along the
path,
\begin{equation}
\min_{\Pi,{P_u}} \ \max_{u \in \Pi} T_u,
\end{equation}
subject to the energy neutrality constraint~\eqref{eq:energy_neu} and the end-to-end outage constraint~\eqref{eq:outage_constraint}.

For a fixed path $\Pi$, let $\mathcal{T}_{\text{H}}(\Pi)$ denote the minimum of
$\max_{u\in\Pi}T_u$ over transmit powers satisfying the energy-neutrality
condition~\eqref{eq:energy_neu} and the outage
constraint~\eqref{eq:outage_constraint}, that is, the value of the program
\begin{equation}
\begin{aligned}
&\qquad\qquad\min_{P_u} \left( \max_{u \in \Pi} T_u \right), \qquad \text{subject to:} \\
&\quad \sum_{(u,v)\in\Pi}\frac{ d_{u,v}^\alpha}{P_u}(N_0 + I_v)\leq \epsilon,\quad T_u\geq \frac{P_u \tau}{\bar{\Phi}_u} \quad \forall u \in \Pi ,
\label{eq:main_problem_2}
\end{aligned}
\end{equation}
where $\bar{\Phi}_u$ is the power harvested by the transmitter $u$, defined in
Section~\ref{sec:system}. The \emph{routing} problem is then
\begin{equation}
\Pi^\star \in \arg\;\min_{\Pi\in \mc{P}^{SD}}\ \mathcal{T}_{\text{H}}(\Pi),
\end{equation}
where $\mc{P}^{SD}$ is the set of all paths from source $S$ to destination $D$.

This is a mixed-combinatorial and continuous optimization problem. However, due to its structure, it can be solved exactly. The key is to introduce an auxiliary variable $T = \max_{u\in\Pi} T_u$ and apply the KKT conditions. This leads to the following closed-form result.

\begin{thm}[Minimum Bottleneck Harvesting Latency –- HOPE]
\label{thm:hope}
For the HOPE problem defined in~\eqref{eq:main_problem_2} and a given path $\Pi$, the optimal (minimum) bottleneck harvesting latency is
\begin{equation}
\mathcal{T}_{\text{H}}(\Pi) = \frac{\tau}{\epsilon}\;
\sum_{(u,v)\in\Pi}
\frac{d_{u,v}^{\alpha}\,(N_0+I_v)}{\bar{\Phi}_u},
\label{eq:T_HOPE_theorem}
\end{equation}
where $\epsilon = -\ln(1-\pie)/\gamma$. 
Consequently, the globally optimal path $\Pi^\star$ is obtained by solving a shortest-path problem with non-negative link weights $w_{u,v} = \frac{d_{u,v}^{\alpha}(N_0+I_v)}{\bar{\Phi}_u}$.
\end{thm}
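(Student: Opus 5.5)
For a fixed path $\Pi$, abbreviate $c_u \triangleq d_{u,v}^{\alpha}(N_0+I_v)$ for the unique link $(u,v)\in\Pi$ that $u$ transmits on. Set $\beta_u \triangleq \tau/\bar{\Phi}_u$. Program~\eqref{eq:main_problem_2} is then to minimize $\max_u T_u$ subject to $\sum_{u\in\Pi} c_u/P_u \le \epsilon$ and $T_u \ge \beta_u P_u$.

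The first step is to eliminate the $T_u$. For any feasible powers, the best choice is $T_u=\beta_u P_u$, so the objective becomes $\max_u \beta_u P_u$. Introduce the epigraph variable $T$ with the constraints $\beta_u P_u\le T$, i.e.\ $P_u\le T/\beta_u$. The problem becomes: minimize $T$ subject to $P_u\le T/\beta_u$ and $\sum_u c_u/P_u\le\epsilon$. This is convex, since $c_u/P_u$ is convex on $P_u>0$ and the remaining constraints are linear. One could invoke KKT as the paper suggests, but a direct two-sided argument is cleaner and gives the same answer.

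For the lower bound, take any feasible $(T,\{P_u\})$. Since $1/P_u\ge \beta_u/T$,
\[
\epsilon\;\ge\;\sum_{u\in\Pi}\frac{c_u}{P_u}\;\ge\;\frac{1}{T}\sum_{u\in\Pi} c_u\beta_u,
\]
so $T\ge \frac{1}{\epsilon}\sum_u c_u\beta_u=\frac{\tau}{\epsilon}\sum_{(u,v)\in\Pi} d_{u,v}^\alpha(N_0+I_v)/\bar{\Phi}_u$.

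For achievability, set $T^\star$ equal to this bound and $P_u^\star=T^\star/\beta_u=T^\star\bar{\Phi}_u/\tau$. Then every energy-neutrality constraint is tight with $T_u=T^\star$, and $\sum_u c_u/P_u^\star=\sum_u c_u\beta_u/T^\star=\epsilon$, so the outage constraint is met with equality. This proves~\eqref{eq:T_HOPE_theorem} and gives the explicit optimal powers. It also matches the KKT picture: at the optimum all harvesting times are equalized and the outage constraint is active. As a consistency check, the outage multiplier can be recovered as $\lambda=T^{\star2}/(\epsilon\,\sum_u c_u\beta_u)$.

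For the routing claim, note that $\mathcal{T}_{\text{H}}(\Pi)$ equals the positive constant $\tau/\epsilon$ times $\sum_{(u,v)\in\Pi} w_{u,v}$, with $w_{u,v}=d_{u,v}^\alpha(N_0+I_v)/\bar{\Phi}_u$. These weights are nonnegative because $\bar{\Phi}_u>0$ (nodes with $\bar{\Phi}_u=0$ are excluded from the graph) and $N_0+I_v\ge0$. Minimizing over $\mc{P}^{SD}$ is therefore a least-cost path problem, solvable by Dijkstra's algorithm. Since the weights are nonnegative, a least-cost walk can always be shortened to a simple path, so restricting to simple paths loses nothing.

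The only delicate point is notational: each transmitter on a simple path has exactly one outgoing link, so $c_u$ is well defined and indexing by $u$ or by $(u,v)$ is equivalent. The weight depends on the receiver through $I_v$, so it is a genuine link weight rather than a node weight, which Dijkstra handles directly. I do not expect a real obstacle. The key step is the elementary inequality $1/P_u\ge\beta_u/T$, which replaces the KKT computation.
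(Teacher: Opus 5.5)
Your proof is correct, but it reaches~\eqref{eq:T_HOPE_theorem} by a different route than the paper.

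The paper introduces the epigraph variable $T$ and observes that the program is convex on $P_u>0$ and satisfies Slater's condition. It then reads off from the KKT stationarity conditions that every energy-neutrality constraint and the additive outage constraint must be active, and substitution yields the closed form. You instead bound the optimal value from both sides directly. Weighting $1/P_u\ge\beta_u/T$ by $c_u$ and summing gives $T\ge\frac{1}{\epsilon}\sum_u c_u\beta_u$ at every feasible point, and the equalized allocation $P_u=T^\star\bar{\Phi}_u/\tau$ attains this bound.

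Your argument needs no convexity, no constraint qualification and no multipliers. It also makes uniqueness of the optimal powers in~\eqref{eq:power_final} immediate. Since $c_u>0$ (because $N_0>0$ and $d_{u,v}\ge d_0$), equality throughout your chain forces $P_u=T^\star/\beta_u$ for every $u$. The KKT route gets uniqueness only by combining two facts. First, KKT conditions are necessary under Slater. Second, $\sum_u\mu_u=1$ (with $\mu_u$ the multiplier of $\beta_uP_u\le T$) forces $\lambda>0$ for the outage constraint, and hence every $\mu_u>0$.

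What the KKT route buys is a systematic template that does not require guessing the optimizer, and it supplies the multiplier as sensitivity information. On that point your consistency check contains a slip. With the constraint written as $\sum_u c_u/P_u\le\epsilon$, stationarity in $P_u$ gives $\mu_u=\lambda c_u\beta_u/(T^\star)^2$. Then $\sum_u\mu_u=1$ yields $\lambda=(T^\star)^2/\sum_u c_u\beta_u=T^\star/\epsilon$, not $(T^\star)^2/(\epsilon\sum_u c_u\beta_u)$. The correct value is consistent with $dT^\star/d\epsilon=-T^\star/\epsilon$. This aside does not affect your proof.
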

\textbf{Proof sketch.}
Introduce an auxiliary variable $T = \max_{u\in\Pi} T_u$. For a fixed path $\Pi$, the problem becomes minimizing $T$ subject to $T \ge T_u$ for all $u \in \Pi$, along with the energy neutrality and the additive outage constraint in~\eqref{eq:outage_constraint}.
The resulting program is convex on $P_u>0$, since each summand
$d_{u,v}^{\alpha}(N_0+I_v)/P_u$ of the outage constraint is convex, and
Slater's condition holds; the KKT conditions are therefore necessary and
sufficient for the global optimum. Solving the stationarity conditions forces
every energy-neutrality constraint tight, so $T=P_u \tau/\bar{\Phi}_u$ for
every node, and the additive outage constraint active. Substituting gives
\[
T = \frac{\tau}{\epsilon}\sum_{(u,v)\in\Pi} \frac{d_{u,v}^\alpha (N_0+I_v)}{\bar{\Phi}_u}.
\]
Minimizing this expression over all paths $\Pi$ from $S$ to $D$ yields the claimed optimal bottleneck latency. $\square$

Theorem~\ref{thm:hope} defines the path cost functional \(\mathcal{T}_{\text{H}}(\Pi)\)  in closed form.
This solution provides a systematic method for joint routing and power allocation, as follows.

\subsection{Routing Algorithm and Power Allocation}
\label{sec:hope_routing}
Let us assign the following weight to each feasible directed link \((u,v)\) in the network:
\begin{equation}
\label{eq:edge_weight}
w_{u,v} \triangleq \frac{d_{u,v}^\alpha (N_0 + I_v)}{\bar{\Phi}_u}\geq 0,
\end{equation}
where nodes with $\bar{\Phi}_u=0$ have been excluded from $\mathcal{G}$ as
described in Section~\ref{sec:system}. Since $N_0>0$ and
$\bar{\Phi}_u>0$, every weight is strictly positive, so the shortest path
returned by Dijkstra's algorithm is simple and no node is revisited.
Because \(\tau/\epsilon\) is a path-independent positive constant, minimizing \(\mathcal{T}_{\text{H}}(\Pi)\) is equivalent to minimizing \(\sum_{(u,v)\in\Pi} w_{u,v}\).
Thus, standard shortest-path algorithms, such as Dijkstra's algorithm, can find a path \(\Pi^\star\) that minimizes \(\sum_{(u,v)\in\Pi} w_{u,v}\) (and therefore \(\mathcal{T}_{\text{H}}(\Pi)\)) in polynomial time, at a cost of \(O(|\mathcal{E}| + |\mathcal{N}|\log|\mathcal{N}|)\) with a binary heap. Note that the number of interferers does not enter this expression: the weight \(w_{u,v}\) depends on the interferers only through the aggregates \(I_u\) and \(I_v\), each of which a node obtains directly by measurement, without resolving the individual interferers or knowing their number.  We return to this point in Section~\ref{sec:tightness}.

Once the minimum-weight path \(\Pi^\star\) is found, the minimized maximum harvesting time is obtained by evaluating~\eqref{eq:T_HOPE_theorem} on \(\Pi^\star\):
\begin{equation}
\mathcal{T}_{\text{H}}(\Pi^\star) = \frac{\tau}{\epsilon}\sum_{(u,v)\in\Pi^\star} w_{u,v}\ .
\end{equation}

For a fixed path, the KKT conditions force all harvesting constraints to be active, i.e. $P_u f_u = T$ with \(f_u=\tau/ \bar{\Phi}_u\).
Hence, the optimal transmit power for each transmitter \(u\in\Pi^\star\) is
\begin{equation}
\label{eq:power_final}
P_u = \frac{\mathcal{T}_{\text{H}}(\Pi^\star)}{f_u}
= \frac{\bar{\Phi}_u}{\tau}\,\mathcal{T}_{\text{H}}(\Pi^\star).
\end{equation}
Substituting \(\mathcal{T}_{\text{H}}(\Pi^\star)\) yields 
\begin{equation}
P_u = \frac{\bar{\Phi}_u}{\epsilon}\,
\sum_{(u',v')\in\Pi^\star}\frac{d_{u',v'}^{\alpha}\,(N_0+I_{v'})}{\bar{\Phi}_{u'}}.
\end{equation}

This procedure yields a computationally efficient joint routing and power-allocation solution 
for minimum harvesting bottleneck latency,
as summarized in Algorithm~\ref{alg:hope_short}.

\begin{algorithm}
\small
\caption{HOPE: Minimum-Latency Reliable Routing}
\label{alg:hope_short}
\begin{algorithmic}[1]
\State \textbf{Precompute} average interference $I_v = \sum_{i \in \mathcal{I}_0} P_i / d_{i,v}^\alpha$ for all nodes $v$.

\State \textbf{Set} $\epsilon = -\gamma^{-1} \ln(1 - \pie)$.

\State \textbf{Assign weight} to each link $(u,v)$: $w_{u,v} = \dfrac{d_{u,v}^\alpha (N_0 + I_v)}{\bar{\Phi}_u}$.

\State \textbf{Find path} $\Pi^\star = \operatorname{Dijkstra}(\mathcal{G}, S, D, w)$.

\State \textbf{Compute} max harvesting time: $\mathcal{T}_{\text{H}}(\Pi^\star) = \dfrac{\tau}{\epsilon} \sum_{(u,v) \in \Pi^\star} w_{u,v}$.

\For{each transmitter $u$ on $\Pi^\star$}

    \State $P_u = \dfrac{\bar{\Phi}_u}{\tau} \mathcal{T}_{\text{H}}(\Pi^\star)$
    
\EndFor

\State \Return $\Pi^\star$, $\mathcal{T}_{\text{H}}(\Pi^\star)$, $\{P_u\}$

\end{algorithmic}
\end{algorithm}

\textbf{Discussion.}
HOPE minimizes only the time at which forwarding can begin. The packet must
then traverse every hop of $\Pi^\star$, and a route that exploits strong
harvesting regions may contain many short hops, so that, depending on the
regime, the cumulative transmission delay $|\Pi^\star|\tau$ need not be
negligible relative to the harvesting latency. Applications with hard
deadlines, such as industrial monitoring or emergency response, must account
for both components, which motivates the total-latency objective of the next
section.

\section{The Extended Problem: Minimizing Total Latency (THOPE)}
\label{sec:thope}

In this section, we propose THOPE (Total‑latency HOPE), which extends the objective to the total end‑to‑end latency. 
THOPE minimizes the total time from the start of harvesting until the destination receives the first packet. 
This total latency of a fixed path $\Pi$ is the sum of the initial bottleneck harvesting latency and the cumulative transmission delay over all hops of $\Pi$, i.e.
\[
\text{Total latency} \ =\ \max_{u \in \Pi} T_u + \sum_{(u,v) \in \Pi} \tau_{u,v}.
\]
We define the total latency cost functional as the smallest achievable total latency along $\Pi$:
\[
\mathcal{T}_{\text{tot}}(\Pi) \;\triangleq\; \min_{P_u} \Big( \max_{u \in \Pi} T_u + \sum_{(u,v) \in \Pi} \tau_{u,v}\Big).
\]

As defined in Section~\ref{sec:transmission_time}, all links use the same transmission time $\tau$.
By incorporating the energy neutrality constraint~\eqref{eq:energy_neu} and the end-to-end outage constraint~\eqref{eq:outage_constraint}, for a given path $\Pi$, the optimization problem is formulated as:
\begin{equation}
\begin{aligned}
&\qquad\qquad\min_{P_u} \Big( \max_{u \in \Pi} T_u + \sum_{(u,v) \in \Pi} \tau\Big), \quad\text{subject to:} \\
&\quad{\sum_{(u,v)\in\Pi}\frac{ d_{u,v}^\alpha}{P_u}(N_0 + I_v)}\leq \epsilon,
 \quad T_u\geq \frac{P_u \tau}{\bar{\Phi}_u} \quad \forall u \in \Pi
\label{eq:total_problem_1}
\end{aligned}
\end{equation}

Following a similar KKT-based approach, we derive the optimal solution for this more complex, non-separable objective.

\begin{thm}[Minimum Total Latency –- THOPE]
\label{thm:thope}
For the THOPE problem defined in~\eqref{eq:total_problem_1}, the optimal (minimum) total end‑to‑end latency is
\begin{equation}
\label{eq:T_THOPE_theorem}
\mathcal{T}_{\text{tot}}(\Pi) \;=\;
\sum_{(u,v)\in\Pi} \tau\left(
\frac{d_{u,v}^{\alpha}\,(N_0+I_v)}{\epsilon\,\bar{\Phi}_u} + 1
\right),
\end{equation}

where $\epsilon = -\ln(1-\pie)/\gamma$,
and all other symbols are as defined in Section~\ref{sec:system}.
\end{thm}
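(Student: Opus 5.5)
The plan is to reduce the THOPE program to the HOPE program of Theorem~\ref{thm:hope}. On a fixed path $\Pi$ the transmission term $\sum_{(u,v)\in\Pi}\tau=|\Pi|\tau$ does not depend on the transmit powers, because by Section~\ref{sec:transmission_time} every link uses the same $\tau$. Moreover, the feasible set of~\eqref{eq:total_problem_1} is identical to that of~\eqref{eq:main_problem_2}. Hence
\[
\mathcal{T}_{\text{tot}}(\Pi)=\min_{P_u}\max_{u\in\Pi}T_u+|\Pi|\tau=\mathcal{T}_{\text{H}}(\Pi)+|\Pi|\tau .
\]
Substituting the closed form~\eqref{eq:T_HOPE_theorem} and distributing the $|\Pi|$ copies of $\tau$ over the hops, one per link, gives exactly~\eqref{eq:T_THOPE_theorem}.

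Because the whole argument rests on the HOPE value, I would make that step self-contained rather than citing only the proof sketch. First, for fixed powers, the minimizing $T_u$ equals $P_u\tau/\bar{\Phi}_u$, so the program becomes: minimize $T$ subject to $P_u\le T\bar{\Phi}_u/\tau$ for all $u\in\Pi$ and $\sum_{(u,v)\in\Pi} c_{u,v}/P_u\le\epsilon$, where $c_{u,v}=d_{u,v}^\alpha(N_0+I_v)$.

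I would prove the bound by a direct argument in place of KKT.

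\textbf{Lower bound.} Every feasible point satisfies $1/P_u\ge \tau/(T\bar{\Phi}_u)$. Therefore
\[
\epsilon\ge\sum_{(u,v)\in\Pi} \frac{c_{u,v}}{P_u}\ge \frac{\tau}{T}\sum_{(u,v)\in\Pi}\frac{c_{u,v}}{\bar{\Phi}_u},
\]
which gives $T\ge\frac{\tau}{\epsilon}\sum_{(u,v)\in\Pi} c_{u,v}/\bar{\Phi}_u$.

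\textbf{Achievability.} Set $P_u=\bar{\Phi}_u T^\star/\tau$, with $T^\star$ equal to that lower bound. Then every energy-neutrality constraint is tight and the outage sum equals $\epsilon$ exactly, so the point is feasible and attains the bound.

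This elementary argument avoids verifying Slater's condition and the stationarity equations, and it handles the possibly nondifferentiable max cleanly.

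The only subtle point, which I expect to be the main (mild) obstacle, is justifying that the additive term truly decouples from the power optimization. This uses three facts. The hop count $|\Pi|$ is fixed once $\Pi$ is fixed. The rate, and hence $\tau$, is fixed network-wide, so power does not shorten transmissions. The objective depends on the $T_u$ only through $\max_u T_u$. I would state these explicitly and note that the optimal powers coincide with those of HOPE, namely~\eqref{eq:power_final}.

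As a corollary, I would record that the per-path cost is a sum of nonnegative link weights $\tau(w_{u,v}/\epsilon+1)$. Minimizing over $\Pi$ is therefore again a shortest-path problem solvable by Dijkstra's algorithm.
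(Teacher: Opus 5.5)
Your proposal is correct. The outer reduction is the same as the paper's: the paper also notes that, on a fixed path, the THOPE objective is the HOPE objective plus the power-independent constant $|\Pi|\tau$ over an identical feasible set, so the HOPE value and optimizer carry over. You differ in how you establish the HOPE value itself. The paper introduces $T=\max_{u\in\Pi}T_u$, invokes convexity on $P_u>0$ and Slater's condition, and reads off from the KKT stationarity conditions that every energy-neutrality constraint and the additive outage constraint are active. You instead give a direct lower bound, from $P_u\le T\bar{\Phi}_u/\tau$ combined with the outage constraint, together with an explicit feasible point that attains it. Your route is more elementary and self-contained. It needs no convexity, no constraint qualification and no differentiability, and it proves attainment of the minimum directly. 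The KKT route places the problem in a standard Lagrangian framework and yields multipliers, but it rests on stationarity computations that the sketch does not display. Your argument also gives more than you state. You say the optimal powers coincide with~\eqref{eq:power_final}, but your achievability step only exhibits one optimal point. The equality case of your chain closes this. Every $d_{u,v}^\alpha(N_0+I_v)$ is strictly positive, so $T=T^\star$ forces $1/P_u=\tau/(T^\star\bar{\Phi}_u)$ for every $u\in\Pi$. Hence the optimal power vector is unique, and all energy-neutrality constraints and the outage constraint are necessarily tight. This is exactly the paper's claim that ``optimality forces $P_u=T\bar{\Phi}_u/\tau$,'' obtained without KKT.
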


\textbf{Proof sketch.} Introduce $T = \max_{u\in\Pi} T_u$. For a fixed path $\Pi$, the objective becomes $T + |\Pi|\times \tau$, where $|\Pi|$ is the number of hops in the path. Using Lagrangian duality and KKT conditions, optimality forces 
$P_u = T\bar{\Phi}_u/\tau$
for every node and the outage constraint to be active; the objective differs
from that of Theorem~\ref{thm:hope} by the constant $|\Pi|\tau$, so convexity
and Slater's condition carry over and the KKT conditions again identify the
global optimum. Substituting yields 
$$T = \frac{\tau}{\epsilon}\sum_{(u,v)\in\Pi}\frac{d_{u,v}^\alpha(N_0+I_v)}{\bar{\Phi}_u}.$$
Hence, the total latency is 
$$T + |\Pi|\times \tau = \sum_{(u,v)\in\Pi}\tau\Bigl(\frac{d_{u,v}^\alpha(N_0+I_v)}{\epsilon\,\bar{\Phi}_u}+1\Bigr).$$
Minimizing over all paths gives the claimed optimal total latency. $\square$

\subsection{Routing Algorithm and Power Allocation}

The expression in~\eqref{eq:T_THOPE_theorem} is a sum of non-negative edge
weights, so minimizing $\mathcal{T}_{\text{tot}}(\Pi)$ over all paths from $S$
to $D$ is equivalent to finding a shortest path in a graph where each directed
link $(u,v)$ carries the weight
\[
w_{u,v} \;\triangleq\; \tau\left(\frac{d_{u,v}^\alpha (N_0 + I_v)}{\epsilon\,\bar{\Phi}_u} + 1\right) \;\ge\; 0.
\]
The first term is the harvesting cost of the hop, inherited from HOPE's
weight~\eqref{eq:edge_weight} scaled by $\tau/\epsilon$; the second is the
per-hop transmission time $\tau$. THOPE's route therefore minimizes the sum of
harvesting and transmission costs, and a route that is short in hops is
favored over one that is long, even when the latter offers better harvesting.
Because all weights are non‑negative, a standard shortest‑path algorithm such as Dijkstra's algorithm can compute the optimal path $\Pi^\star$ in polynomial time (e.g., $O(|\mathcal{E}| + |\mathcal{N}|\log|\mathcal{N}|)$ with a binary heap).

THOPE is therefore Algorithm~\ref{alg:hope_short} with three changes: line~3
uses the weight $w_{u,v}$ above, line~5 computes
$\mathcal{T}_{\text{tot}}(\Pi^\star)=\sum_{(u,v)\in\Pi^\star}w_{u,v}$, and
line~7 uses $T=\mathcal{T}_{\text{tot}}(\Pi^\star)-|\Pi^\star|\tau$ in place of
$\mathcal{T}_{\text{H}}(\Pi^\star)$.

Once the optimal path $\Pi^\star$ is found, using the same reasoning as in Section~\ref{sec:hope_routing}, the corresponding optimal transmit power for each transmitter $u$ on $\Pi^\star$ is
\[
P_u \;=\; \frac{T}{f_u}
      \;=\; \frac{\bar{\Phi}_u}{\tau}\; T.
\]
Substituting the closed‑form expression for $T$ gives
\[
P_u \;=\; \bar{\Phi}_u\;
        \sum_{(u',v')\in\Pi^\star}
          \frac{d_{u',v'}^{\alpha}\,(N_0+I_{v'})}
               {\epsilon\,\bar{\Phi}_{u'} } .
\]
These powers satisfy the energy‑neutrality condition $T = P_u f_u = \mathcal{T}_{\text{tot}}(\Pi^\star) - \sum_{(u,v)\in\Pi^\star}\tau$ (i.e., the harvesting latency is the same for all nodes on the path, and the total latency equals the sum of that common harvesting latency plus the total transmission time).  Moreover, these powers satisfy the additive condition~\eqref{eq:outage_constraint},
so by Lemma~\ref{lem:e2e_outage} the end-to-end outage probability is at
most $\pie$; since that condition is conservative, the true outage
probability lies below the target, and Section~\ref{sec:tightness} shows
how little latency this conservatism costs.

\section{Tightness of the Additive Outage Constraint}
\label{sec:tightness}

Lemma~\ref{lem:e2e_outage} replaces the exact, multiplicative outage constraint
by a conservative additive sufficient condition, and this substitution is what
makes the closed-form solutions of Theorems~\ref{thm:hope} and~\ref{thm:thope}
possible. Because both algorithms operate their nodes at the boundary of that
condition, any looseness in Lemma~\ref{lem:e2e_outage} translates directly into
transmit power that is larger than necessary, and hence into harvesting latency
that is larger than necessary. In this section we quantify that looseness. 
\subsection{Exact End-to-End Constraint}
Throughout this section we write
\begin{equation}
\Lambda \;\triangleq\; \gamma\,\epsilon \;=\; -\ln(1-\pie),
\qquad
I_{i,v} \;\triangleq\; \frac{P_i}{d_{i,v}^{\alpha}},
\label{eq:tight_defs}
\end{equation}
so that $I_{i,v}$ is the average power received at node $v$ from the single
interferer $i$ and $I_v=\sum_{i\in\mc{I}_0}I_{i,v}$.
We define the
per-interferer interference-to-signal ratio on link $(u,v)$,
\begin{equation}
x_{i,v} \;\triangleq\; \frac{\gamma\,d_{u,v}^{\alpha}\,I_{i,v}}{P_u}.
\label{eq:x_def}
\end{equation}

The derivation in Appendix~A establishes, before any
bounding is applied, the exact per-link success probability under Rayleigh
fading,
\begin{equation}
1-p_{\text{out}}^{u,v}
=\frac{1}{\prod_{i\in\mc{I}_0}\bigl(1+x_{i,v}\bigr)}
\exp\Bigl(-\frac{\gamma\,d_{u,v}^{\alpha}N_0}{P_u}\Bigr).
\label{eq:exact_link}
\end{equation}
Substituting~\eqref{eq:exact_link} into the end-to-end
definition~\eqref{eq:outage_SD} and taking logarithms, the constraint
$p_{\text{out}}^{SD}\le\pie$ is equivalent to
\begin{equation}
\Psi \;\triangleq\;
\sum_{(u,v)\in\Pi}\Bigl[\,\frac{\gamma\,d_{u,v}^{\alpha}N_0}{P_u}
+\sum_{i\in\mc{I}_0}\ln\bigl(1+x_{i,v}\bigr)\Bigr]\;\le\;\Lambda .
\label{eq:exact_constraint}
\end{equation}
The gap between~\eqref{eq:exact_constraint} and the additive condition
of Lemma~\ref{lem:e2e_outage} comes from the per-interferer relaxation $\ln(1+x)\le x$.

Both HOPE and THOPE drive the energy-neutrality constraint to equality, i.e.
$P_u = \bar{\Phi}_u T/\tau$ with $T=\max_{u\in\Pi}T_u$
(Theorems~\ref{thm:hope} and~\ref{thm:thope}). Substituting this
into~\eqref{eq:x_def} gives $x_{i,v}=c_{u,v}I_{i,v}/T$ with the link constant
\begin{equation}
c_{u,v}\;\triangleq\;\frac{\gamma\,\tau\,d_{u,v}^{\alpha}}{\bar{\Phi}_u}.
\label{eq:c_def}
\end{equation}
Hence, in terms of the reciprocal bottleneck latency $z\triangleq 1/T$, the
exact constraint~\eqref{eq:exact_constraint} is
\begin{equation}
\Psi(z)=\sum_{(u,v)\in\Pi}\Bigl[c_{u,v}N_0\,z
+\sum_{i\in\mc{I}_0}\ln\bigl(1+c_{u,v}I_{i,v}z\bigr)\Bigr]\;\le\;\Lambda .
\label{eq:psi_z}
\end{equation}
Setting $P_u = \bar{\Phi}_u T/\tau$ and substituting $z$ and $c_{u,v}$ in the right hand side of additive condition of~\eqref{eq:outage_constraint} in
Lemma~\ref{lem:e2e_outage} gives 
\begin{equation}
B(z)\;\triangleq\;z\sum_{(u,v)\in\Pi}c_{u,v}\bigl(N_0+I_v\bigr)
\;=\;\Lambda\,\mathcal{T}_{\text{H}}(\Pi)\,z,
\label{eq:B_def}
\end{equation}

Hence, the comparison in this section is between the linear function $B$ and the exact function~$\Psi$.

Let us denote the bottleneck harvesting latency of path $\Pi$ under the exact
constraint~\eqref{eq:psi_z} by
\begin{equation}
\mathcal{T}_{\text{H}}^{\text{exact}}(\Pi)
\;\triangleq\;\min\bigl\{\,T>0 \;:\; \Psi(1/T)\le\Lambda \,\bigr\}
\label{eq:Tex_def}
\end{equation}
which compared to $\mathcal{T}_{\text{H}}(\Pi)$ of
Theorem~\ref{thm:hope}, is the same quantity obtained from the exact outage constraint. 

\subsection{Structural Properties}

Fig.~\ref{fig:tangent} summarizes the geometry that underlies the remainder of
this section. Both formulations $B$ and $\Psi$ spend the same outage budget $\Lambda$, but they
measure the expenditure differently: $B$ overstates the true exponent $\Psi$, and thus reaches the budget early, at $z_0=1/\mathcal{T}_{\text{H}}(\Pi)$, and HOPE stops
there.
On the other hand, $\Psi$ does not reach $\Lambda$ until
$z_{\text{exact}}=1/\mathcal{T}_{\text{H}}^{\text{exact}}(\Pi)$. 
The horizontal separation between $z_0$ and $z_{\text{exact}}$ is the recoverable latency.
This value is
governed entirely by the vertical difference $B-\Psi$. The following proposition makes this precise.

\begin{figure}[t]
\centering
\begin{tikzpicture}[x=0.90cm, y=0.78cm, >=Latex]
  \draw[->, gray!60] (0,0) -- (5.6,0);
  \draw[->, gray!60] (0,0) -- (0,5.0);
  \draw[gray!60, dashed] (0,3) -- (3.75,3);
  \draw[gray!70, dashed, thick] (0,0) -- (3.5,4.2);
  \draw[black, very thick, samples=120, smooth, domain=0:4.9]
        plot (\x, {4.706*ln(1+0.255*\x)});
  \draw[gray!60, dotted] (2.5,0) -- (2.5,3);
  \draw[gray!60, dotted] (3.5,0) -- (3.5,4.2);
  \fill[gray!70] (2.5,3) circle (1.6pt);
  \fill[black]   (3.5,3) circle (1.6pt);
  \draw[black, thin] (3.36,4.2) -- (3.64,4.2);
  \draw[black, thin] (3.36,3.0) -- (3.64,3.0);
  \draw[black, thin]  (3.5,3.03) -- (3.5,4.17);
  \draw[->, thin] (3.57,3.62) -- (3.86,4.24);
  \node[font=\footnotesize] at (4.08,4.55) {$B(z_{\text{exact}})-\Lambda$};
  \draw[<->] (2.5,0.55) -- (3.5,0.55);
  \node[above, font=\footnotesize] at (3.0,0.60) {gain};
  \node[left,  font=\small] at (-0.08,3) {$\Lambda$};
  \node[below, font=\small] at (2.5,-0.12) {$z_0$};
  \node[below, font=\small] at (3.5,-0.12) {$z_{\text{exact}}$};
  \node[below, font=\small] at (5.6,-0.12) {$z=1/T$};
  \node[above left,  font=\small] at (3.02,3.62) {$B(z)$};
  \node[right, font=\small] at (4.95,3.83) {$\Psi(z)$};
\end{tikzpicture}
\caption{The exact outage exponent $\Psi$ of~\eqref{eq:psi_z} (solid) and its
linear counterpart $B$ of~\eqref{eq:B_def} (dashed), as functions of the
reciprocal bottleneck latency $z=1/T$. $B$ is the tangent to $\Psi$ at the
origin, so it lies above $\Psi$ and meets the budget $\Lambda$ first, at
$z_0=1/\mathcal{T}_{\text{H}}(\Pi)$; the exact constraint is met only at
$z_{\text{exact}}=1/\mathcal{T}_{\text{H}}^{\text{exact}}(\Pi)$. The slack
$B(z_{\text{exact}})-\Lambda$ is the amount discarded by the relaxation
$\ln(1+x)\le x$; Theorem~\ref{thm:gap} bounds it, and hence the horizontal
gain, by the fraction $\bar{x}/2$ of $B(z_{\text{exact}})$.}
\label{fig:tangent}
\end{figure}

\begin{prop}[Concavity, tangency, and monotone refinement]
\label{prop:concave}
For any path $\Pi$, the function $\Psi$ in~\eqref{eq:psi_z} satisfies:
\begin{enumerate}
\item[(i)] $\Psi(0)=0$, and $\Psi$ is strictly increasing and, provided
$\mc{I}_0\neq\emptyset$, strictly concave on $z\ge0$. Consequently,
$\Psi(z)=\Lambda$ has a unique root $z_{\text{exact}}$ and
$\mathcal{T}_{\text{H}}^{\text{exact}}(\Pi)=1/z_{\text{exact}}$. In the
degenerate case $\mc{I}_0=\emptyset$ the logarithmic terms are absent, $\Psi$
coincides with $B$, and the gap studied in this section vanishes;
\item[(ii)] $B$ is the tangent to $\Psi$ at the origin. Hence,
$B(z)\ge\Psi(z)$ for all $z\ge0$;
\item[(iii)] starting from $z_0=1/\mathcal{T}_{\text{H}}(\Pi)$, the Newton
iteration
\begin{align}
z_{k+1}&=z_k+\frac{\Lambda-\Psi(z_k)}{\Psi'(z_k)},
\label{eq:newton}\\
\Psi'(z)&=\sum_{(u,v)\in\Pi}\Bigl[c_{u,v}N_0+\sum_{i\in\mc{I}_0}
\frac{c_{u,v}I_{i,v}}{1+c_{u,v}I_{i,v}z}\Bigr],
\label{eq:psi_prime}
\end{align}
satisfies $z_k\le z_{k+1}\le z_{\text{exact}}$ for every $k$.
\end{enumerate}
\end{prop}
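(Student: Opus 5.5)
The proof works directly with the explicit form \eqref{eq:psi_z} of $\Psi$, a finite sum of a linear term and logarithms of affine functions of $z$ with positive slopes $c_{u,v}I_{i,v}$. All constants $c_{u,v}$, $N_0$, $I_{i,v}$ are strictly positive, since $\bar{\Phi}_u>0$, $d_{u,v}\ge d_0>0$ and $P_i>0$.

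For (i), $\Psi(0)=0$ is immediate because $\ln 1=0$. Differentiating termwise gives \eqref{eq:psi_prime}, and every summand there is positive, so $\Psi'(z)>0$ on $z\ge 0$; the term $c_{u,v}N_0$ alone already suffices. Differentiating once more gives
\[
\Psi''(z)=-\sum_{(u,v)\in\Pi}\sum_{i\in\mc{I}_0}\frac{c_{u,v}^2I_{i,v}^2}{(1+c_{u,v}I_{i,v}z)^2}.
\]
This is strictly negative as soon as $\mc{I}_0\neq\emptyset$, which gives strict concavity; if $\mc{I}_0=\emptyset$ the double sum vanishes. For the root, I use $\Psi(0)=0<\Lambda$, since $\pie\in(0,1)$ gives $\Lambda>0$. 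I also need $\Psi(z)\to\infty$, which holds because $\Psi(z)\ge z\sum c_{u,v}N_0$. Continuity and strict monotonicity then give a unique $z_{\text{exact}}$ with $\Psi(z_{\text{exact}})=\Lambda$. Moreover $\{T>0:\Psi(1/T)\le\Lambda\}=[1/z_{\text{exact}},\infty)$, so the minimum in \eqref{eq:Tex_def} equals $1/z_{\text{exact}}$. In the degenerate case the expression collapses to $z\sum c_{u,v}N_0$, which is $B(z)$ because then $I_v=0$.

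For (ii), evaluating \eqref{eq:psi_prime} at $z=0$ gives
\[
\Psi'(0)=\sum_{(u,v)\in\Pi} c_{u,v}\Bigl(N_0+\sum_i I_{i,v}\Bigr)=\sum_{(u,v)\in\Pi} c_{u,v}(N_0+I_v).
\]
This is exactly the slope of $B$ in \eqref{eq:B_def}, and since $B(0)=0=\Psi(0)$, $B$ is the tangent line to $\Psi$ at $0$. Concavity of $\Psi$ puts it below every tangent line, so $B\ge\Psi$. The same inequality also follows termwise from $\ln(1+x)\le x$.

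For (iii), I argue by induction that $z_k\le z_{\text{exact}}$ for every $k$, and that this forces $z_{k+1}\ge z_k$.

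\textbf{Base case.} By definition $B(z_0)=\Lambda$. By (ii), $\Psi(z_0)\le\Lambda$, and since $\Psi$ is increasing, $z_0\le z_{\text{exact}}$.

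\textbf{Monotonicity.} Given $z_k\le z_{\text{exact}}$, we have $\Psi(z_k)\le\Lambda$ and $\Psi'(z_k)>0$, so the Newton step is nonnegative and $z_{k+1}\ge z_k$.

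\textbf{Upper bound.} This is the main point. Let $L_k(z)=\Psi(z_k)+\Psi'(z_k)(z-z_k)$ be the tangent line at $z_k$; by construction $L_k(z_{k+1})=\Lambda$. By concavity, $L_k\ge\Psi$ everywhere, so $\Lambda=L_k(z_{k+1})\ge\Psi(z_{k+1})$. Monotonicity of $\Psi$ then gives $z_{k+1}\le z_{\text{exact}}$, which closes the induction.

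The only delicate point is the direction of this inequality. For a concave increasing function approached from the left, the tangent overshoots $\Psi$, so each Newton step undershoots the root rather than overshooting it. This is the same geometric fact as (ii), applied at $z_k$ instead of at $0$. It also yields convergence as a bonus, since the iterates form a bounded monotone sequence whose limit is a fixed point, and a fixed point must satisfy $\Psi=\Lambda$.
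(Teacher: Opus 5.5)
Your proof is correct and follows essentially the standard route one expects for this result: you differentiate \eqref{eq:psi_z} term by term to get $\Psi'>0$ and $\Psi''<0$, match $\Psi'(0)$ with the slope of $B$ so that concavity (or $\ln(1+x)\le x$) gives $B\ge\Psi$, and run an induction in which the tangent line at $z_k$ dominates $\Psi$, so each Newton step is nonnegative and lands where $\Psi\le\Lambda$. Your closing convergence remark is a valid extra that the statement does not require.
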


\begin{proof}
The proof is provided in Appendix~B.
\end{proof}

Part (iii) has a practical consequence that matters more than the convergence
rate: because every iterate satisfies $z_k\le z_{\text{exact}}$, every iterate is
\emph{feasible}. Therefore, the refinement may be truncated after any number of
steps, without ever violating the end-to-end outage
guarantee.

\subsection{A Bound on the Achievable Latency Gap}

The refinement of Proposition~\ref{prop:concave} part (iii) converges to
$\mathcal{T}_{\text{H}}^{\text{exact}}(\Pi)$, which is limited by the target outage constraint.
The following theorem quantifies the latency gap between $\mathcal{T}_{\text{H}}^{\text{exact}}(\Pi)$ and $\mathcal{T}_{\text{H}}(\Pi)$.

\begin{thm}[Latency gap of the additive bound]
\label{thm:gap}
Let $\pie<2/3$ be the target end-to-end outage probability
of~\eqref{eq:mul_out}, and let
\begin{equation}
\bar{x}\;\triangleq\;e^{\Lambda}-1\;=\;\frac{\pie}{1-\pie}
\label{eq:xbar_def}
\end{equation}
denote the largest interference-to-signal ratio $x_{i,v}$ that any single
interferer can present at any receiver on a path without, by itself, exhausting
the outage budget. Then, for \emph{every} path $\Pi$,
\begin{equation}
\mathcal{T}_{\text{H}}^{\text{exact}}(\Pi)\;\ge\;
\Bigl(1-\frac{\bar{x}}{2}\Bigr)\,\mathcal{T}_{\text{H}}(\Pi).
\label{eq:gap_bound}
\end{equation}
\end{thm}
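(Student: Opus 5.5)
The plan is to compare $\Psi$ and $B$ at the single point $z_{\text{exact}}=1/\mathcal{T}_{\text{H}}^{\text{exact}}(\Pi)$, which exists and is unique by Proposition~\ref{prop:concave}(i). I will show that
\[
\Psi(z_{\text{exact}})\ \ge\ \Bigl(1-\tfrac{\bar{x}}{2}\Bigr)B(z_{\text{exact}}).
\]
Since $\Psi(z_{\text{exact}})=\Lambda$ and $B(z)=\Lambda\,\mathcal{T}_{\text{H}}(\Pi)\,z$ by~\eqref{eq:B_def}, this reads $\Lambda\ge(1-\bar{x}/2)\,\Lambda\,\mathcal{T}_{\text{H}}(\Pi)/\mathcal{T}_{\text{H}}^{\text{exact}}(\Pi)$. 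This is exactly~\eqref{eq:gap_bound}. The degenerate case $\mc{I}_0=\emptyset$ is immediate, since then $\Psi=B$ and $\mathcal{T}_{\text{H}}^{\text{exact}}(\Pi)=\mathcal{T}_{\text{H}}(\Pi)$.

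\textbf{Step 1: every per-interferer ratio is at most $\bar{x}$.} Write $x_{i,v}=c_{u,v}I_{i,v}z_{\text{exact}}\ge0$ for the ratios evaluated at $z_{\text{exact}}$. Every summand of $\Psi$ in~\eqref{eq:psi_z} is non-negative, and $\Psi(z_{\text{exact}})=\Lambda$. Hence each individual term satisfies $\ln(1+x_{i,v})\le\Lambda$, that is, $x_{i,v}\le e^{\Lambda}-1=\bar{x}$. This is the precise sense of the phrase ``without, by itself, exhausting the outage budget'' in the statement.

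\textbf{Step 2: a second-order lower bound on the logarithm.} For $x\ge0$ we have $\ln(1+x)\ge x-x^2/2$. This follows because the difference vanishes at $0$ and has derivative $\tfrac{1}{1+x}-1+x=\tfrac{x^2}{1+x}\ge0$. Combining this with Step 1 gives
\[
\ln(1+x_{i,v})\ \ge\ x_{i,v}\Bigl(1-\tfrac{x_{i,v}}{2}\Bigr)\ \ge\ x_{i,v}\Bigl(1-\tfrac{\bar{x}}{2}\Bigr).
\]
The noise term $c_{u,v}N_0z_{\text{exact}}$ is kept unchanged. Because $1\ge 1-\bar{x}/2$, it is also bounded below by $(1-\bar{x}/2)\,c_{u,v}N_0z_{\text{exact}}$.

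\textbf{Step 3: sum over links.} Summing the bounds of Step 2 over all links and interferers, and using $\sum_i I_{i,v}=I_v$, yields
\[
\Lambda=\Psi(z_{\text{exact}})\ \ge\ \Bigl(1-\tfrac{\bar{x}}{2}\Bigr)z_{\text{exact}}\sum_{(u,v)\in\Pi}c_{u,v}(N_0+I_v)=\Bigl(1-\tfrac{\bar{x}}{2}\Bigr)B(z_{\text{exact}}).
\]
The hypothesis $\pie<2/3$ is equivalent to $\bar{x}=\pie/(1-\pie)<2$, so the factor $1-\bar{x}/2$ is positive. This positivity is what allows rearranging the inequality into~\eqref{eq:gap_bound} without reversing its direction.

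The only delicate point is Step 1. The bound $x_{i,v}\le\bar{x}$ must be obtained at $z_{\text{exact}}$ rather than at $z_0$. I obtain it from the non-negativity of the individual summands together with the equality $\Psi(z_{\text{exact}})=\Lambda$. The remaining steps are elementary.
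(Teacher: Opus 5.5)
Your proof is correct and is essentially the paper's argument: you bound every $x_{i,v}$ by $\bar{x}$ at $z_{\text{exact}}$ using the non-negativity of the summands of $\Psi$, apply $\ln(1+x)\ge x-x^2/2$ term by term to get $\Lambda=\Psi(z_{\text{exact}})\ge(1-\bar{x}/2)\,B(z_{\text{exact}})$, and turn this into the latency bound via $B(z)=\Lambda\,\mathcal{T}_{\text{H}}(\Pi)\,z$. One minor remark: your final rearrangement only divides by $\Lambda z_{\text{exact}}>0$, so the positivity of $1-\bar{x}/2$ is not needed there; the hypothesis $\pie<2/3$ only makes the bound non-vacuous.
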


\textbf{Proof sketch.} Because all summands of $\Psi$ are non-negative and sum
to $\Lambda$ at the optimum, no single term may exceed $\Lambda$, which bounds
every $x_{i,v}$ by $\bar{x}$. Applying $x-\ln(1+x)\le x^2/2$ term by term and
using that bound converts the discarded slack into a fraction $\bar{x}/2$ of
$B(z_{\text{exact}})$, and~\eqref{eq:B_def} turns the resulting inequality into
the claimed bound on the latency. The detailed proof is given in
Appendix~C. $\square$

In words, $\bar{x}$ is the largest ratio at which a single interferer can be present without causing outage on its own, and Theorem~\ref{thm:gap} says that the additive condition is loose only to the extent that some interferer approaches this limit.

Theorem~\ref{thm:gap} compares the two latencies on a common path. 
However, because it
holds for \emph{every} path, it also bounds what could be gained by re-optimizing the path under the exact constraint.

\begin{cor}[No gain from re-routing]
\label{cor:gap_global}
Under the hypotheses of Theorem~\ref{thm:gap}, let $\Pi^{\text{exact}}$ minimize
$\mathcal{T}_{\text{H}}^{\text{exact}}$ over all paths from $S$ to $D$, that is,
let it be the best route available to a scheme that uses the exact constraint
for both power allocation and routing. Then
\begin{equation}
\mathcal{T}_{\text{H}}^{\text{exact}}(\Pi^{\text{exact}})
\;\ge\;\Bigl(1-\frac{\bar{x}}{2}\Bigr)\,\mathcal{T}_{\text{H}}(\Pi^\star).
\label{eq:gap_global}
\end{equation}
No scheme within the model of Section~\ref{sec:system}, however it
allocates power and however it routes, can therefore reduce HOPE's bottleneck
latency by more than the fraction $\bar{x}/2$ of its value.
\end{cor}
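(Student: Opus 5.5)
The plan is to combine Theorem~\ref{thm:gap}, applied to the single path $\Pi^{\text{exact}}$, with the optimality of $\Pi^\star$ for the additive cost established in Theorem~\ref{thm:hope}. First, instantiate Theorem~\ref{thm:gap} with $\Pi=\Pi^{\text{exact}}$. Its hypotheses are only $\pie<2/3$ and that the path is an $S$--$D$ path, so it gives
\[
\mathcal{T}_{\text{H}}^{\text{exact}}(\Pi^{\text{exact}})\;\ge\;\Bigl(1-\frac{\bar{x}}{2}\Bigr)\mathcal{T}_{\text{H}}(\Pi^{\text{exact}}).
\]
Second, since $\Pi^\star$ minimizes $\mathcal{T}_{\text{H}}$ over $\mc{P}^{SD}$ (Theorem~\ref{thm:hope} and Section~\ref{sec:hope_routing}), we have $\mathcal{T}_{\text{H}}(\Pi^{\text{exact}})\ge\mathcal{T}_{\text{H}}(\Pi^\star)$.

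To chain these two inequalities, the factor $1-\bar{x}/2$ must be nonnegative. This is exactly where $\pie<2/3$ is used: $\bar{x}=\pie/(1-\pie)<2$ precisely when $\pie<2/3$, so the factor is positive and multiplying preserves the inequality. This yields~\eqref{eq:gap_global}.

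One technical point needs checking: $\Pi^{\text{exact}}$ must exist and be simple. Existence follows because $\mc{P}^{SD}$, restricted to simple paths, is finite, and $\mathcal{T}_{\text{H}}^{\text{exact}}$ is well defined on each such path by Proposition~\ref{prop:concave}(i). If the minimization instead ranges over all walks, I would argue that deleting a cycle removes nonnegative summands from $\Psi$. For fixed $z$ this decreases $\Psi$, and therefore does not increase the exact latency, so a simple minimizer exists.

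For the interpretive sentence ("no scheme \dots"), I would note that any power allocation on a path $\Pi$ has a maximum harvesting time $T$ satisfying $T_u\ge P_u\tau/\bar{\Phi}_u$. Replacing each $P_u$ by $\bar{\Phi}_u T/\tau$ does not decrease any $P_u$. Since $\Psi$ is decreasing in every $P_u$, this replacement keeps the outage constraint satisfied and leaves $T$ unchanged, so $T\ge\mathcal{T}_{\text{H}}^{\text{exact}}(\Pi)\ge\mathcal{T}_{\text{H}}^{\text{exact}}(\Pi^{\text{exact}})$.

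This monotonicity reduction is the only step with real content. The main obstacle is making sure that $\mathcal{T}_{\text{H}}^{\text{exact}}$, as defined in~\eqref{eq:Tex_def} under equalized powers, really is the optimum of the exact problem over all allocations. I would verify this by the monotonicity argument above: each summand of~\eqref{eq:exact_constraint} is decreasing in $P_u$.
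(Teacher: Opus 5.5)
Your argument for~\eqref{eq:gap_global} is the paper's: apply Theorem~\ref{thm:gap} to $\Pi^{\text{exact}}$, then use $\mathcal{T}_{\text{H}}(\Pi^{\text{exact}})\ge\mathcal{T}_{\text{H}}(\Pi^\star)$, which follows from the optimality of $\Pi^\star$ in Theorem~\ref{thm:hope}. Your extra checks are correct and make explicit what the paper leaves implicit: the sign of $1-\bar{x}/2$ when chaining, and the monotonicity argument showing that the equalized powers $P_u=\bar{\Phi}_u T/\tau$ are optimal under the exact constraint, which is what actually justifies the ``no scheme'' sentence.
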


\begin{proof}
Theorem~\ref{thm:gap} holds for every path, in particular for
$\Pi^{\text{exact}}$, giving $\mathcal{T}_{\text{H}}^{\text{exact}}
(\Pi^{\text{exact}})\ge(1-\bar{x}/2)\,\mathcal{T}_{\text{H}}
(\Pi^{\text{exact}})$. The right-hand side is a latency under the additive condition, and
$\Pi^\star$ minimizes $\mathcal{T}_{\text{H}}$ over all paths by
Theorem~\ref{thm:hope}, so $\mathcal{T}_{\text{H}}(\Pi^{\text{exact}})\ge
\mathcal{T}_{\text{H}}(\Pi^\star)$. Combining the two
gives~\eqref{eq:gap_global}.
\end{proof}

Theorem~\ref{thm:gap} is universal but coarse. The following
proposition complements it with a bound that is typically sharper by orders of magnitude (as will be shown in Section~\ref{sec:results}).

\begin{prop}[Computable per-instance bound]
\label{prop:instbound}
Fix a path $\Pi$, let $z_0=1/\mathcal{T}_{\text{H}}(\Pi)$, and define
\begin{equation}
G(\Pi)\;\triangleq\;\frac{1}{\Lambda}
\sum_{(u,v)\in\Pi}\sum_{i\in\mc{I}_0}
\bigl(c_{u,v}I_{i,v}z_0\bigr)^{2},
\label{eq:G_def}
\end{equation}
that is, the sum of the squared ratios $x_{i,v}$ evaluated at $z_0$, normalized by the outage budget. If $\pie\le1/2$ and $G(\Pi)\le1/2$,
then
\begin{equation}
1-\frac{\mathcal{T}_{\text{H}}^{\text{exact}}(\Pi)}{\mathcal{T}_{\text{H}}(\Pi)}
\;\le\;\frac{1-\sqrt{1-2G(\Pi)}}{2}
\;=\;\frac{G(\Pi)}{2}+O\bigl(G(\Pi)^2\bigr).
\label{eq:instbound}
\end{equation}
\end{prop}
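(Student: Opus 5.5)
The plan is to compare $\Psi$ with the linear function $B$ of~\eqref{eq:B_def} along the ray through $z_0$, using a second-order lower bound on $\Psi$. Then I would convert the equation $\Psi(z_{\text{exact}})=\Lambda$ into a quadratic inequality in the normalized variable $s\triangleq z_{\text{exact}}/z_0=\mathcal{T}_{\text{H}}(\Pi)/\mathcal{T}_{\text{H}}^{\text{exact}}(\Pi)$. By Proposition~\ref{prop:concave}(ii), $s\ge 1$. By~\eqref{eq:B_def} and the definition of $z_0$, $B(z)=\Lambda z/z_0$.

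First, the noise terms $c_{u,v}N_0z$ appear identically in $\Psi$ and $B$. The two functions therefore differ only through $\sum_{(u,v)}\sum_i\bigl[x_{i,v}(z)-\ln(1+x_{i,v}(z))\bigr]$, where $x_{i,v}(z)=c_{u,v}I_{i,v}z\ge0$. Applying the elementary inequality $x-\ln(1+x)\le x^2/2$ for $x\ge0$ term by term gives
\[
\Psi(z)\;\ge\;B(z)-\tfrac12\sum_{(u,v)\in\Pi}\sum_{i\in\mc{I}_0}\bigl(c_{u,v}I_{i,v}z\bigr)^2
\;=\;\Lambda\,\frac{z}{z_0}-\frac{\Lambda G(\Pi)}{2}\Bigl(\frac{z}{z_0}\Bigr)^{2}.
\]
The equality uses the fact that the quadratic sum scales as $(z/z_0)^2$ times its value at $z_0$, which is $\Lambda G(\Pi)$ by~\eqref{eq:G_def}. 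Evaluating at $z_{\text{exact}}$, where $\Psi=\Lambda$, and dividing by $\Lambda$ yields $\tfrac{G}{2}s^2-s+1\ge0$.

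Next, I would locate $s$ relative to the roots of this quadratic. For $0<G\le1/2$ the roots are
\[
s_\pm=\frac{1\pm\sqrt{1-2G}}{G}=\frac{2}{1\mp\sqrt{1-2G}},
\]
so $s_-\in(1,2]$ and $s_+\ge2$, with equality only at $G=1/2$, where $s_-=s_+$. The case $G=0$ is trivial: then no interferer contributes, and $s=1$. The inequality says $s\le s_-$ or $s\ge s_+$, and the main obstacle is excluding the far branch $s\ge s_+$. For this I would invoke Theorem~\ref{thm:gap}. Since $\pie\le1/2$, we have $\bar{x}=\pie/(1-\pie)\le1$, and~\eqref{eq:gap_bound} gives $s\le 1/(1-\bar{x}/2)\le2\le s_+$. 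The only way to reach $s_+$ is $s=2=s_+$, which forces $G=1/2$ and hence $s_+=s_-$. In every case, therefore, $s\le s_-$.

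Finally, $s\le s_-=2/(1+\sqrt{1-2G})$ gives
\[
1-\frac{\mathcal{T}_{\text{H}}^{\text{exact}}(\Pi)}{\mathcal{T}_{\text{H}}(\Pi)}=1-\frac1s\le1-\frac{1+\sqrt{1-2G}}{2}=\frac{1-\sqrt{1-2G}}{2},
\]
which is~\eqref{eq:instbound}. The expansion $\sqrt{1-2G}=1-G-G^2/2+O(G^3)$ gives the stated $G/2+O(G^2)$ form.
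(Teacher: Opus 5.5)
Your proposal is correct and matches the argument the paper's hypotheses point to, though Appendix~D itself is not included here. The bound $x-\ln(1+x)\le x^2/2$ turns $\Psi(z_{\text{exact}})=\Lambda$ into a quadratic inequality in the latency ratio, and Theorem~\ref{thm:gap} with $\pie\le 1/2$ (so $\bar{x}\le 1$, hence $s\le 2$) discards the far root.

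You could also exclude the far root without Theorem~\ref{thm:gap}. Since $\Psi$ is strictly increasing and $\Psi(sz_0)\ge\Lambda\bigl(s-\tfrac{G}{2}s^2\bigr)$, whose right-hand side already equals $\Lambda$ at $s=s_-$, you get $\Psi(s_-z_0)\ge\Lambda$ and hence $s\le s_-$. This shows the hypothesis $\pie\le 1/2$ is not actually needed.
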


\begin{proof}
The proof is provided in Appendix~D.
\end{proof}

The two bounds are of different kinds, and each is bought at a price.
The bound of Theorem~\ref{thm:gap} and Corollary~\ref{cor:gap_global} is pessimistic, but it requires no knowledge
of the network. They only depend on the outage target $\pie$  and hold for every network.
The bound of Proposition~\ref{prop:instbound} is much sharper, but it needs the output of
HOPE or THOPE and the placement of the interferers.

\subsection{Refinement Procedure}

Proposition~\ref{prop:concave} part (iii) yields a refinement that is common to both
proposed algorithms. Both HOPE and THOPE terminate with the same two
operations: a bottleneck latency $T$ computed from the outage budget on the
selected path, and the transmit powers $P_u=\bar{\Phi}_u T/\tau$ derived from it.
The refinement acts only on the first of these, leaving the routing decision and
the power-allocation rule untouched, and is summarized in
Algorithm~\ref{alg:refine}.

\begin{algorithm}
\small
\caption{Exact refinement of the bottleneck latency (common to HOPE and THOPE)}
\label{alg:refine}
\begin{algorithmic}[1]
\Require path $\Pi^\star$ and bottleneck latency $T$, from
Algorithm~\ref{alg:hope_short} (line~5) for HOPE, or from THOPE net of the
transmission time $|\Pi^\star|\tau$
\State \textbf{Set} $\Lambda=-\ln(1-\pie)$, \; $z \gets 1/T$
\State \textbf{Precompute} $c_{u,v}=\gamma\tau d_{u,v}^{\alpha}/(\bar{\Phi}_u)$ and
$I_{i,v}=P_i/d_{i,v}^{\alpha}$ for all $(u,v)\in\Pi^\star$, $i\in\mc{I}_0$
\Repeat
  \State $\Psi \gets \sum_{(u,v)\in\Pi^\star}\bigl[c_{u,v}N_0 z
         +\sum_{i}\ln(1+c_{u,v}I_{i,v}z)\bigr]$
  \State $\Psi' \gets \sum_{(u,v)\in\Pi^\star}\bigl[c_{u,v}N_0
         +\sum_{i}c_{u,v}I_{i,v}/(1+c_{u,v}I_{i,v}z)\bigr]$
  \State $z \gets z+(\Lambda-\Psi)/\Psi'$
\Until{convergence}
\State $\mathcal{T}_{\text{H}}^{\text{exact}}(\Pi^\star) \gets 1/z$
\State \textbf{Update} $P_u \gets
       \bar{\Phi}_u \mathcal{T}_{\text{H}}^{\text{exact}}(\Pi^\star)/\tau$ for all
       $u\in\Pi^\star$
\State \Return $\mathcal{T}_{\text{H}}^{\text{exact}}(\Pi^\star)$, $\{P_u\}$
\end{algorithmic}
\end{algorithm}

For HOPE the refined bottleneck latency is
$\mathcal{T}_{\text{H}}^{\text{exact}}(\Pi^\star)$ itself; for THOPE the refined
total latency is
$\mathcal{T}_{\text{H}}^{\text{exact}}(\Pi^\star)+|\Pi^\star|\tau$, since the
cumulative transmission time is unaffected. Each iteration costs
$O(|\Pi^\star|\,|\mc{I}_0|)$ operations, and by Proposition~\ref{prop:concave}
the procedure may be stopped at any point; in our experiments at most four
iterations were required for a relative tolerance of $10^{-12}$, and a single
iteration already recovered essentially the entire gap.

\section{Numerical Results}
\label{sec:results}
To validate the proposed optimal routing framework HOPE and THOPE, we simulate a wireless network with $|\mathcal{N}|$ system nodes and $|\mathcal{I}_0|$ interferers uniformly distributed in a $d \times d$ square area. The source node $S$ is placed at $(0, 0)$ and the destination $D$ at $(d, d)$. 

Unless stated otherwise, we use the following default parameters:  $d = 10~\mathrm{m}$, path-loss exponent $\alpha = 2.2$, noise power is $N_0 = 10^{-6} \mathrm{W}$, interferer transmit power $P_i = 0.1 \mathrm{W}$, transmission duration per hop $\tau = 0.1~\mathrm{s}$, SINR threshold $\gamma = -10~\mathrm{dB}$, reference distance $d_0 = 1~\mathrm{m}$, communication range $r_c = d$, energy conversion efficiency $\eta = 0.8$ in the linear harvesting model $\bar{\Phi}_u=\eta I_u$ of Section~\ref{sec:system}, and target end‑to‑end outage probability $\pie = 0.2$. All results are averaged over $1000$ independent random network realizations. These values are normalized and illustrative: the path-loss model carries no reference gain, which rescales the harvesting term relative to $\tau$, so the results show how the schemes compare as the ratio of harvesting to transmission cost varies, rather than absolute latencies for a specific deployment.

We first illustrate the trade-off between harvesting and transmission delay on
a network instance in which the two objectives select substantially
different routes (Section~\ref{sec:snapshots}).

\subsection{Network Snapshots}
\label{sec:snapshots}

\begin{figure}[t]
\centering
\includegraphics[width=0.7\columnwidth]{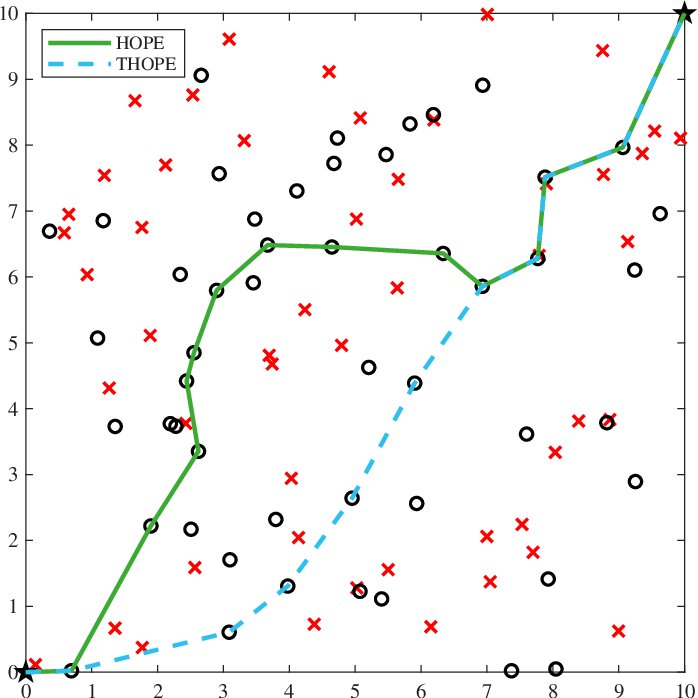}
\caption{Comparison of selected paths in an instance where the hop-count penalty changes the selected route substantially: 
$50$ system nodes (circles) and $50$ interferers ($\times$) uniformly distributed. The green solid line shows the HOPE path (14 hops, total latency $3.11$~s) and the blue dashed line the THOPE path (9 hops, total latency $2.82$~s). HOPE minimizes the harvesting bottleneck and may select longer paths with better energy‑harvesting opportunities; THOPE trades a slight increase in harvesting time for significantly fewer transmission hops, lowering total end‑to‑end latency.
}
\label{fig:network_snapshot}
\end{figure}

Fig.~\ref{fig:network_snapshot} shows an instance in which the two objectives select different routes with substantially different hop counts. In this example, THOPE achieves a total latency of $2.82$~s, while HOPE yields $3.11$~s, but with a different path. HOPE selects a $14$‑hop route because its objective focuses solely on the harvesting bottleneck: additional hops that lie in regions of strong ambient interference can shorten the maximum harvesting time even if they increase the total number of transmissions. THOPE, on the other hand, minimizes the sum of the harvesting bottleneck and the cumulative transmission delay. It therefore prefers the $9$‑hop path, where the reduction in transmission time compensates for a modest increase in harvesting time. This example highlights the distinct trade‑offs between the two design criteria. 

\subsection{Benchmark Strategies}
\label{sec:benchmark}

To quantify the benefits of joint routing and power optimization, we compare HOPE and THOPE against two natural benchmarks:

\textbf{Shortest Path (SP) routing with uniform outage allocation:}  
    This scheme selects the path $\Pi_{\text{SP}}$ that minimizes the Euclidean distance, i.e., edge weights are $d_{u,v}$.  
    To meet the end‑to‑end outage constraint $\pie$, we divide the outage budget equally among the $L$ hops of $\Pi_{\text{SP}}$, setting the per‑link outage target to $\pi_L = 1-(1-\pie)^{1/L}$.  
    Applying the same upper bound as in Lemma~\ref{lem:e2e_outage}, the minimum required transmit power for node $u$ on this path is
    \begin{equation}
        P^{\text{SP}}_u = \frac{\gamma\, d_{u,v}^\alpha (N_0 + I_{v})}{-\ln(1-\pi_L)}.
    \end{equation}
    The corresponding harvesting latency is the maximum of the individual $T_u = P^{\text{SP}}_u\tau/(\bar{\Phi}_u)$:
    \begin{equation}
        \mathcal{T}_H(\Pi_{\text{SP}}) = \max_{u\in \Pi_{\text{SP}}} \frac{P^{\text{SP}}_u\tau}{\bar{\Phi}_u}.
        \label{eq:sp_latency_1}
    \end{equation}
    The total latency adds the transmission time over $L_{\text{SP}}$ hops, i.e. $\mathcal{T}_{\text{tot}}(\Pi_{\text{SP}}) = \mathcal{T}_H(\Pi_{\text{SP}}) + L_{\text{SP}}\,\tau$.

\textbf{Minimum Energy Routing (MER) with outage constraints}: This benchmark minimizes the total transmit energy subject to the additive outage constraint~\cite{sheikholeslami2016energy}.  
    Each link $(u,v)$ is assigned a cost $\sqrt{d_{u,v}^\alpha(N_0+I_v)}$, and the transmit power for node $u$ on the optimal MER path $\Pi_{\text{MER}}$ is
    \begin{equation*}
        P^{\text{MER}}_u = \frac{1}{\epsilon} \sqrt{d_{u,v}^\alpha (N_0 + I_v)}
        \!\!\sum_{(u',v')\in\Pi_{\text{MER}}}\!\!
          \sqrt{d_{u',v'}^\alpha (N_0 + I_{v'})}.
    \end{equation*}
    MER does not explicitly account for harvesting asymmetry, but by reducing transmit energy it indirectly reduces the required harvested energy.  
    The harvesting latency is
    \begin{equation}
        \mathcal{T}_H(\Pi_{\text{MER}}) = \max_{u\in \Pi_{\text{MER}}} \frac{P^{\text{MER}}_u\tau}{\bar{\Phi}_u},
        \label{eq:mer_latency_1}
    \end{equation}
    and the total latency with $L_{\text{MER}}$ hops is
    $\mathcal{T}_{\text{tot}}(\Pi_{\text{MER}}) = \mathcal{T}_H(\Pi_{\text{MER}}) + L_{\text{MER}}\,\tau$.

\subsection{Performance Comparison to Benchmarks}
\label{sec:benchmark_compare}

We now compare the average total latency of the four algorithms over $1000$
network realizations. The ordering is the same at every operating point we
tested: THOPE is fastest, followed by HOPE, with MER next and SP slowest. 

Only THOPE and HOPE price harvesting and interference in the link weight, so
only they can place transmitters where the ambient field is strong; MER, by
minimizing transmit energy, lightens the harvesting burden indirectly and so
trails them while still beating SP, which ignores interference altogether. The
two sweeps below therefore report how each parameter moves the curves rather
than re-establishing that ranking.

Fig.~\ref{fig:number_interferers_all} varies the number of interferers from
$5$ to $100$ with $50$ system nodes. The delay of HOPE and THOPE is low and
nearly flat, because two opposing effects cancel: a higher interferer density
raises the aggregate interference $I_v$ at the receivers, which degrades the
SINR and hence the transmit power required, and at the same time raises $I_u$
at the transmitters, which shortens the time needed to accumulate that energy.
Beyond a moderate density the two approximately balance. SP, which exploits
neither effect, is the most sensitive to interferer density.

\begin{figure}[t]
\centering
\includegraphics[width=0.75\columnwidth]{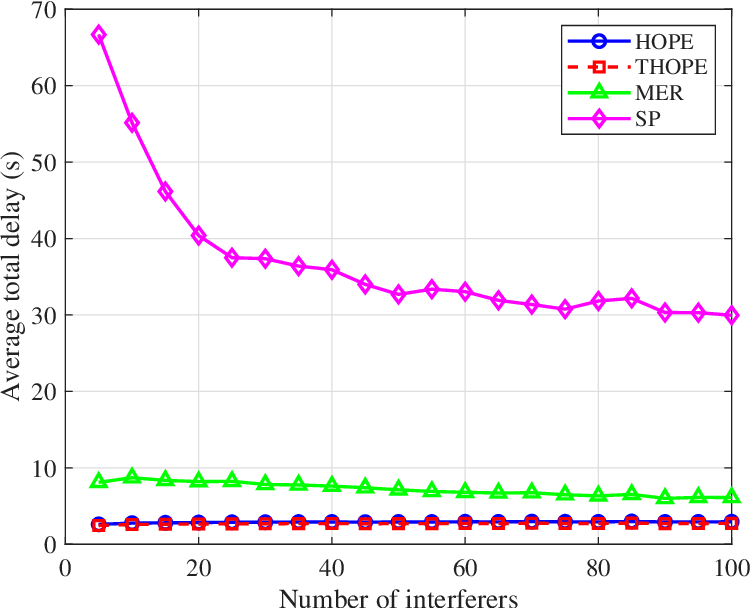}
\caption{Average total delay versus number of interferers for HOPE, THOPE, MER, and SP. The number of system nodes is $50$.}
\label{fig:number_interferers_all}
\end{figure}

Fig.~\ref{fig:number_relays_all} varies the number of system nodes from $25$
to $100$ with $50$ interferers. The delays of HOPE, THOPE and MER all decrease
as more relays become available, since a denser deployment offers a richer set
of candidate paths, from which a route with better harvesting conditions and
fewer hops can be assembled. SP's delay is almost constant, as its route is
determined by geometry alone.

\begin{figure}[t]
\centering
\includegraphics[width=0.75\columnwidth]{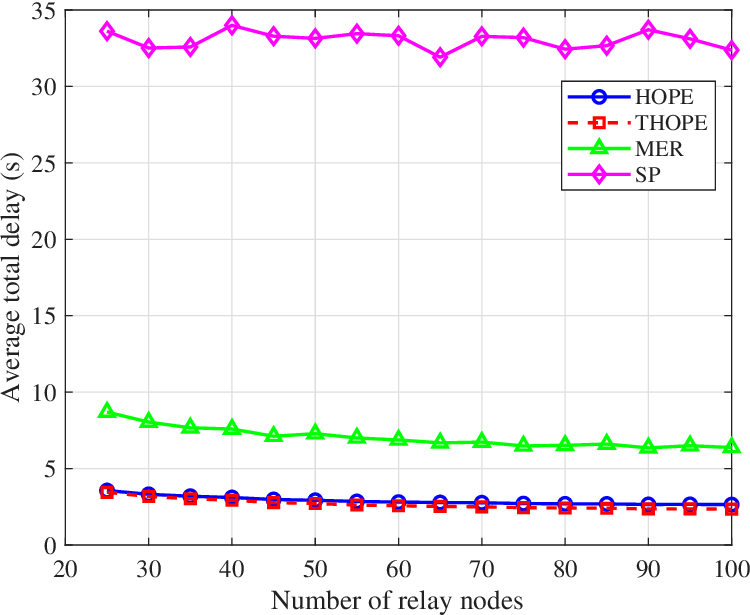}
\caption{Average total delay versus number of system nodes for HOPE, THOPE, MER, and SP. The number of interferers is $50$.}
\label{fig:number_relays_all}
\end{figure}

\subsection{Total Latency Histograms}
\label{sec:histograms}
To provide insight into the distribution of the delay experienced across different network realizations, Fig.~\ref{fig:histogram} plots histograms of the total latency for each algorithm over $1000$ realizations (default parameters). 

The median total latency values are $2.65$~s (THOPE), $2.86$~s (HOPE), $6.52$~s (MER), and $27.46$~s (SP). 
The distributions for HOPE and THOPE are sharply concentrated around their medians, indicating highly predictable performance. 
In contrast, SP and, to a lesser extent, MER exhibit heavy tails. This predictability makes HOPE and THOPE particularly attractive for delay‑sensitive applications with strict latency guarantees.

\begin{figure}[t]
\centering
\includegraphics[
clip=true,
trim={1.5cm 1.2cm 2cm 1.2cm},
width=0.8\columnwidth]{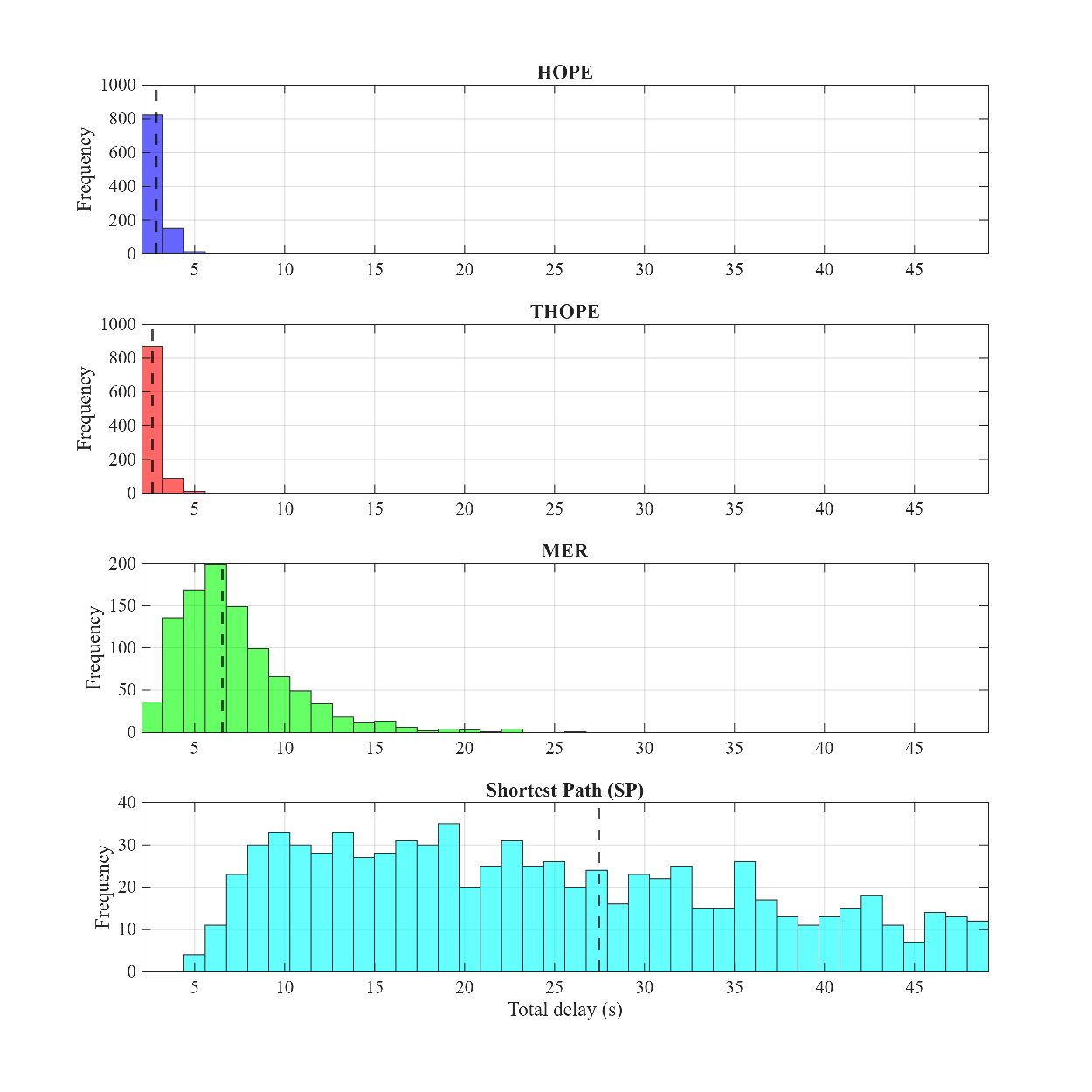}
\caption{Histograms of total latency over $1000$ network realizations for (a) HOPE, (b) THOPE, (c) MER, and (d) SP. 
}
\label{fig:histogram}
\end{figure}

\subsection{Tightness of the Additive Outage Constraint}
\label{sec:tightness_numerics}

We now quantify empirically the latency gap analyzed in
Section~\ref{sec:tightness}. For each network realization we run HOPE and
THOPE, apply the refinement of Algorithm~\ref{alg:refine} to the selected path to find $\mathcal{T}_{\text{H}}^{\text{exact}}$,
and record the relative latency reduction
$1-\mathcal{T}_{\text{H}}^{\text{exact}}/\mathcal{T}_{\text{H}}$.

Table~\ref{tab:tightness} reports the mean reduction of the bottleneck
harvesting latency as a function of $\pie$, together with the analytical cap
of
Theorem~\ref{thm:gap} and the per-instance bound
of Proposition~\ref{prop:instbound}, evaluated on the route each algorithm
returns. 
The results are averaged over $1000$ network realizations with the default parameters. Since
Proposition~\ref{prop:instbound} is a per-path statement, the bound is
evaluated separately on the route each algorithm returns. It held in every
realization, and its hypothesis $G(\Pi)\le1/2$ was satisfied throughout (the largest value of $G(\Pi)$ observed was $0.102$, at $\pie=0.5$)

Multiple conclusions stand out from Table~\ref{tab:tightness}. First, from the measured (exact) latency column, it is apparent that the additive constraint of Lemma~\ref{lem:e2e_outage} is very tight for all values of $\pie$. For instance, at  $\pie=0.2$ the mean recoverable latency is $0.19\%$.
Also, the spread about that mean is narrow with a $95$th percentile of $0.40\%$.

Second, although both Theorem~\ref{thm:gap} and Proposition~\ref{prop:instbound} give bounds on the reduction in latency, Proposition~\ref{prop:instbound} gives much tighter bounds.
In most cases, these bounds are identical to the average measured reduction, and are about two orders of magnitude below the pessimistic  caps  given by Theorem~\ref{thm:gap}. 

\begin{table}[t]
\centering
\caption{}
\label{tab:tightness}
\begin{tabular}{c c c c c c}
\toprule
& & \multicolumn{2}{c}{HOPE route} & \multicolumn{2}{c}{THOPE route}\\
\cmidrule(lr){3-4}\cmidrule(lr){5-6}
$\pie$ & Thm.~\ref{thm:gap} & Prop.~\ref{prop:instbound} & measured
       & Prop.~\ref{prop:instbound} & measured\\
\midrule
$0.05$ & $2.63\%$  & $0.04\%$ & $0.04\%$ & $0.05\%$ & $0.05\%$\\
$0.10$ & $5.56\%$  & $0.09\%$ & $0.09\%$ & $0.10\%$ & $0.10\%$\\
$0.20$ & $12.50\%$ & $0.19\%$ & $0.19\%$ & $0.25\%$ & $0.25\%$\\
$0.30$ & $21.43\%$ & $0.31\%$ & $0.31\%$ & $0.47\%$ & $0.46\%$\\
$0.50$ & $50.00\%$ & $0.61\%$ & $0.59\%$ & $1.20\%$ & $1.14\%$\\
\bottomrule
\end{tabular}
\end{table}

Finally, the reduction in latency grows approximately linearly in $\pie$, mirroring the cap, but
remains roughly two orders of magnitude below it throughout. 

\section{Conclusion and Future Work}
\label{sec:conclusion}

We have studied joint routing and power allocation for multi-hop networks powered entirely by ambient radio-frequency interference. Replacing the
multiplicative end-to-end outage constraint by an additive sufficient condition makes the resulting min--max problem exactly solvable: HOPE
minimizes the bottleneck harvesting latency and THOPE the total end-to-end latency, each through a single shortest-path computation with interference-aware link weights and closed-form transmit powers. 

Because the construction rests on a relaxation, we also quantified its cost. 
In Theorem~\ref{thm:gap}, we proved that the achievable latency reduction is capped by $\pie/\bigl(2(1-\pie)\bigr)$ on every path.
We also proved that a tight per-instance bound is
computable from the algorithms' own output and the geometry of the interferers' locations (Proposition~\ref{prop:instbound}). 

Numerically, we showed that HOPE and THOPE reduce average total latency by up to an order of
magnitude relative to both benchmarks, and their delay distributions are sharply concentrated around the median. This matters for applications with hard deadlines.
Also, we showed that at the outage target of $\pie=0.2$,  the measured reduction in HOPE's bottleneck latency with respect to the latency when the exact outage formula is used is $0.19\%$. The average per-instance bound from  Proposition~\ref{prop:instbound} is also $0.19\%$, while the analytical reduction cap of Theorem~\ref{thm:gap} is $12.5\%$.

Three extensions suggest themselves. The decoding threshold $\gamma$ is treated as fixed, but it can be optimized. The energy-neutrality
constraint in this paper reflects a worst case in which no
energy is carried across intervals; permitting accumulation in the supercapacitor buffer would shift the metric from first-packet latency to
steady-state cycle time.
Finally, the single-flow assumption could be
relaxed by scheduling multiple concurrent flows, whose transmissions would then interfere with and could be harvested from one another.

\bibliographystyle{IEEEtran}  
\bibliography{refs} 

\end{document}